   \newcommand\SkipToFmtEnd{}%
   \newcommand\EndFmtInput{}%
   \long\def\SkipToFmtEnd#1\EndFmtInput{}%
\newcommand\ReadOnlyOnce[1]{\@ifundefined{#1}{\@namedef{#1}{}}\SkipToFmtEnd}
\DeclareFontFamily{OT1}{cmtex}{}
\DeclareFontShape{OT1}{cmtex}{m}{n}
  {<5><6><7><8>cmtex8
   <9>cmtex9
   <10><10.95><12><14.4><17.28><20.74><24.88>cmtex10}{}
\DeclareFontShape{OT1}{cmtex}{m}{it}
  {<-> ssub * cmtt/m/it}{}
\DeclareFontShape{OT1}{cmtt}{bx}{n}
  {<5><6><7><8>cmtt8
   <9>cmbtt9
   <10><10.95><12><14.4><17.28><20.74><24.88>cmbtt10}{}
\DeclareFontShape{OT1}{cmtex}{bx}{n}
  {<-> ssub * cmtt/bx/n}{}
\newcommand{\Conid}[1]{\mathit{#1}}
\newcommand{\Varid}[1]{\mathit{#1}}
\newcommand{\anonymous}{\kern0.06em \vbox{\hrule\@width.5em}}
\renewcommand{\leq}{\leqslant}
\renewcommand{\geq}{\geqslant}
\newdimen\mathindent\mathindent\leftmargini}%
\def\resethooks{%
  \global\let\SaveRestoreHook\empty
  \global\let\ColumnHook\empty}
\newcommand*{\savecolumns}[1][default]%
  {\g@addto@macro\SaveRestoreHook{\savecolumns[#1]}}
\newcommand*{\restorecolumns}[1][default]%
  {\g@addto@macro\SaveRestoreHook{\restorecolumns[#1]}}
\newcommand*{\aligncolumn}[2]%
  {\g@addto@macro\ColumnHook{\column{#1}{#2}}}
\newcommand{\onelinecommentchars}{\quad-{}- }
\newcommand{\commentbeginchars}{\enskip\{-}
\newcommand{\commentendchars}{-\}\enskip}
\newcommand{\visiblecomments}{%
  \let\onelinecomment=\onelinecommentchars
  \let\commentbegin=\commentbeginchars
  \let\commentend=\commentendchars}
\newcommand{\invisiblecomments}{%
  \let\onelinecomment=\empty
  \let\commentbegin=\empty
  \let\commentend=\empty}
\newlength{\blanklineskip}
\newcommand{\hsindent}[1]{\quad}
\let\hspre\empty
\let\hspost\empty
\newcommand{\hsnewpar}[1]%
  {{\parskip=0pt\parindent=0pt\par\vskip #1\noindent}}
\newcommand{\hscodestyle}{}
\newcommand{\sethscode}[1]%
  {\expandafter\let\expandafter\hscode\csname #1\endcsname
   \expandafter\let\expandafter\endhscode\csname end#1\endcsname}
   \let\hspre\(\let\hspost\)%
   \let\hspre\(\let\hspost\)%
\newcommand{\plainhs}{\sethscode{plainhscode}}
\def\codeframewidth{\arrayrulewidth}
   \let\endoflinesave=\\
   \framedhslinecorrect\endoflinesave{.5ex}\hline
\newcommand{\framedhslinecorrect}[2]%
  {#1[#2]}
\def\column##1##2{}%
   \newcommand\>[1][]{}\newcommand\<[1][]{}\newcommand\\[1][]{}%
   \def\fromto##1##2##3{##3}%
\let\orighscode=\hscode
   \let\origendhscode=\endhscode
   \def\endhscode{\def\hscode{\endgroup\def\@currenvir{hscode}\\}\begingroup}
\def\hscode{\endgroup\def\@currenvir{hscode}}}%
   \global\let\hscode=\orighscode
   \global\let\endhscode=\origendhscode}%
\let\HaskellResetHook\empty
\newcommand*{\AtHaskellReset}[1]{%
  \g@addto@macro\HaskellResetHook{#1}}
\newcommand*{\HaskellReset}{\HaskellResetHook}
\newcommand\hsforall{\global\let\hsdot=\hsperiodonce}
\newcommand*\hsperiodonce[2]{#2\global\let\hsdot=\hscompose}
\newcommand*\hscompose[2]{#1}
\newcommand\hsexists{\global\let\hsdot=\hsperiodonce}
\newcommand{\circo}{\mathrel{\kern 0.12em%
      \raisebox{1pt}{\tikz \draw[line width=0.6pt] circle(1.1pt);}%
      \kern 0.12em}}
\newlength{\mylen}
\newcommand{\myapply}{\mathrel{\kern 0.12em\scalebox{0.8}{\$}\kern 0.12em}}
\renewcommand{\Conid}[1]{{\mathsf{#1}}}
\newtheorem{theorem}{Theorem}
\newtheorem{lemma}{Lemma}
\def\commentbegin{\quad\begingroup\color{SeaGreen}\{\ }
\def\commentend{\}\endgroup}
\definecolor{mediumpersianblue}{rgb}{0.0, 0.4, 0.65}
\newcommand\numberthis{\refstepcounter{equation}\tag{\theequation}}
\begin{document}

\title{Functional Pearl:\\
Bottom-Up Computation Using Trees of Sublists}

\author{\color{black}Shin-Cheng Mu}
\date{%
Institute of Information Science, Academia Sinica
}

\maketitle

\begin{abstract}
Some top-down problem specifications, if executed directly, may compute sub-problems repeatedly.
Instead, we may want a bottom-up algorithm that stores solutions of sub-problems in a table to be reused.
It can be tricky, however, to figure out how the table can be represented and efficiently maintained.
We study a special case: computing a function \ensuremath{\Varid{h}} taking lists as inputs such that \ensuremath{\Varid{h}\;\Varid{xs}} is defined in terms of all immediate sublists of \ensuremath{\Varid{xs}}.
Richard Bird studied this problem in 2008, and presented a concise but cryptic algorithm without much explanation.
We give this algorithm a proper derivation, and discover a key property that allows it to work.
The algorithm builds trees that have certain shapes --- the sizes along the left spine is a diagonal in Pascal's triangle.
The crucial function we derive transforms one diagonal to the next.
\end{abstract}

\section{Introduction}

A list \ensuremath{\Varid{ys}} is said to be an \emph{immediate sublist} of \ensuremath{\Varid{xs}} if \ensuremath{\Varid{ys}} can be obtained by removing exactly one element from \ensuremath{\Varid{xs}}.
For example, the four immediate sublists of \ensuremath{\text{\ttfamily \char34 abcd\char34}} are \ensuremath{\text{\ttfamily \char34 abc\char34}}, \ensuremath{\text{\ttfamily \char34 abd\char34}}, \ensuremath{\text{\ttfamily \char34 acd\char34}}, and \ensuremath{\text{\ttfamily \char34 bcd\char34}}.
Consider computing a function \ensuremath{\Varid{h}} that takes a list as input, with the property that the value of \ensuremath{\Varid{h}\;\Varid{xs}} depends on values of \ensuremath{\Varid{h}} at all the immediate sublists of \ensuremath{\Varid{xs}}.
For example, as seen in Figure \ref{fig:td-call-tree}, \ensuremath{\Varid{h}\;\text{\ttfamily \char34 abcd\char34}} depends on \ensuremath{\Varid{h}\;\text{\ttfamily \char34 abc\char34}}, \ensuremath{\Varid{h}\;\text{\ttfamily \char34 abd\char34}}, \ensuremath{\Varid{h}\;\text{\ttfamily \char34 acd\char34}}, and \ensuremath{\Varid{h}\;\text{\ttfamily \char34 bcd\char34}}.
In this top-down manner, to compute \ensuremath{\Varid{h}\;\text{\ttfamily \char34 abc\char34}} we make calls to \ensuremath{\Varid{h}\;\text{\ttfamily \char34 ab\char34}}, \ensuremath{\Varid{h}\;\text{\ttfamily \char34 ac\char34}}, and \ensuremath{\Varid{h}\;\text{\ttfamily \char34 bc\char34}}; to compute \ensuremath{\Varid{h}\;\text{\ttfamily \char34 abd\char34}}, we make a call to \ensuremath{\Varid{h}\;\text{\ttfamily \char34 ab\char34}} as well --- many values end up being re-computed.
One would like to instead proceed in a bottom-up manner, storing computed values so that they can be reused.
For this problem, one might want to build a lattice-like structure, like that in Figure~\ref{fig:ch-lattice}, from bottom to top,
sharing the value on one layer that are used in constructing the next layer.

\begin{figure}[h]
\centering
\includegraphics[width=0.5\textwidth]{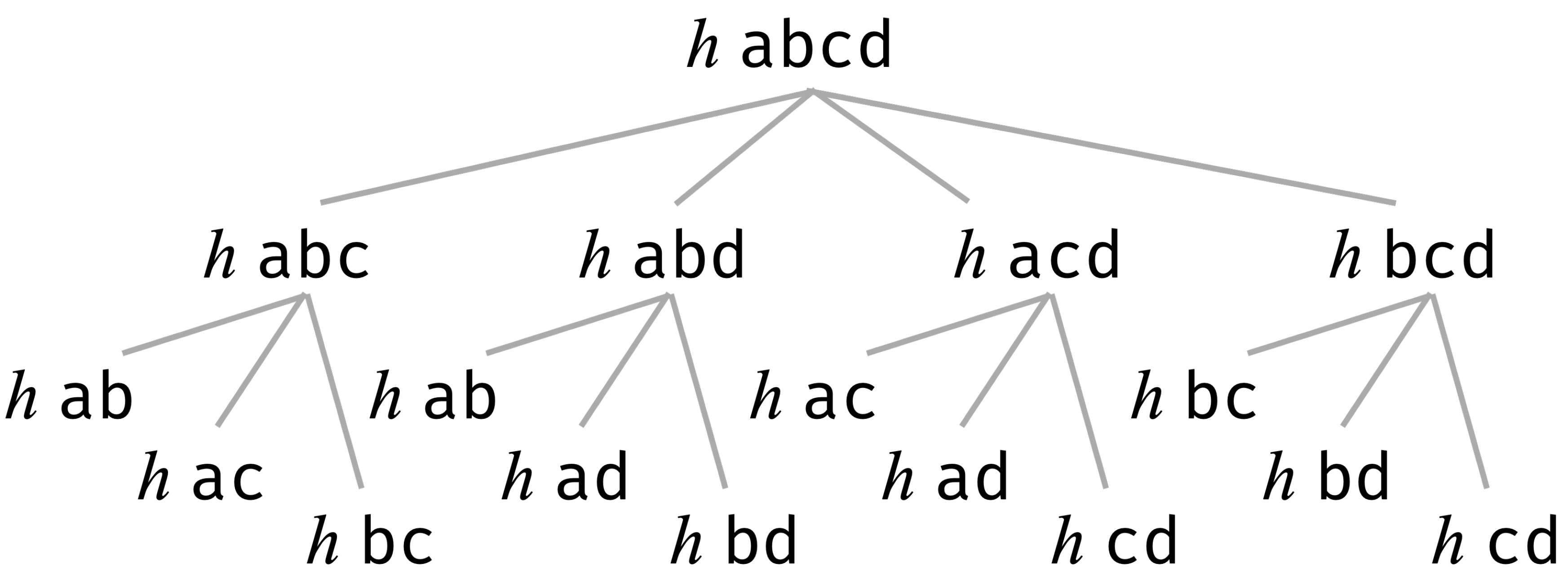}
\caption{Computing \ensuremath{\Varid{h}\;\text{\ttfamily \char34 abcd\char34}} top-down. String constants are shown using monospace font but without quotes, to save space.}
\label{fig:td-call-tree}
\end{figure}

\cite{Bird:08:Zippy} presented an interesting study of the relationship between top-down and bottom-up algorithms.
It was shown that if an algorithm can be written in a specific top-down style,
with ingredients that satisfy certain properties,
there is an equivalent bottom-up algorithm that stores intermediate results in a table.
The ``all immediate sublists'' instance was the last example of the paper.
To satisfy the said properties, however, Bird had to introduce additional data structures and helper functions out of the blue.
The rationale for designing these data structures and functions was not obvious, nor was it clear why the needed properties are met.
The resulting bottom-up algorithm is concise, elegant, but also cryptic --- all the more reason to present the proper calculation it deserves.

\begin{figure}
\centering
\includegraphics[width=0.85\textwidth]{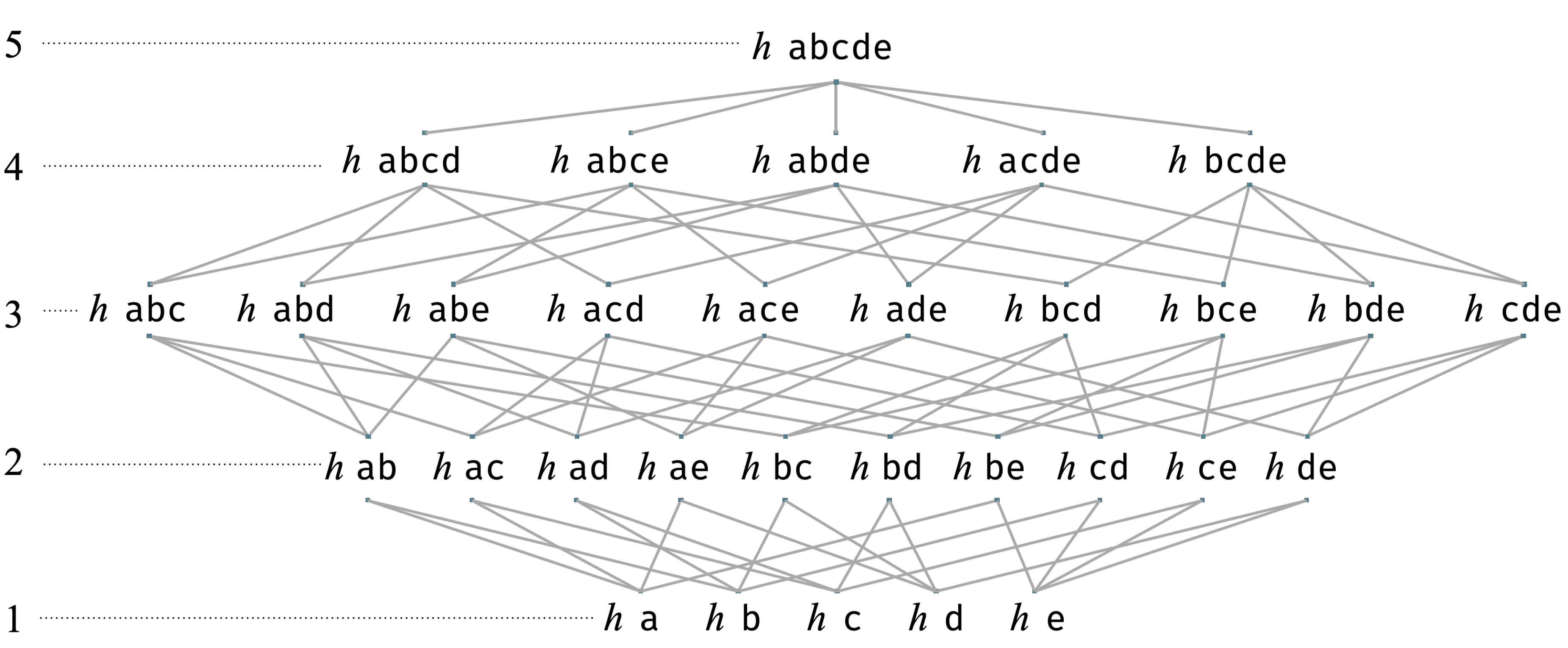}
\caption{Computing \ensuremath{\Varid{h}\;\text{\ttfamily \char34 abcde\char34}} bottom-up.}
\label{fig:ch-lattice}
\end{figure}

In this pearl we review this problem, present an alternative specification, and derive Bird's algorithm.
It turns out that the key property we rely on is different from that in Bird's \citeyearpar{Bird:08:Zippy}.
Driven by this property, our main derivation is much more straight-forward.
This suggests that, while many bottom-up algorithms look alike, the reasons why they work may be more diverse than we thought, and there is a lot more to be discovered regarding reasoning about their correctness.

Before we start, one might ask: are there actually problems whose solution of input \ensuremath{\Varid{xs}} depends only on solutions for immediate sublists of \ensuremath{\Varid{xs}}?
It turns out that it is quite common.
While problems such as \emph{minimum editing distance} or \emph{longest common subsequence} are defined on two lists,
it is known in the algorithm community that, with clever encoding, they can be rephrased as problems defined on one list,
whose solution depends on immediate sublists.
Many problems in additive combinatorics \citep{TaoVu:12:Additive} can also be cast into this form.

\section{Specification}
\label{sec:spec}

We use a Haskell-like notation throughout the paper.
Like in Haskell, if a function is defined by multiple clauses, the patterns and guards are matched top to bottom.
Differences from Haskell include that we allow \ensuremath{\Varid{n}\mathbin{+}\Varid{k}} patterns, and that we denote the type of lists by \ensuremath{\Conid{L}}.
Since we will use natural transformations and \ensuremath{\Varid{map}} a lot, for brevity we denote the \ensuremath{\Varid{map}} function of lists as \ensuremath{\Varid{L}\mathbin{::}(\Varid{a}\to \Varid{b})\to \Conid{L}\;\Varid{a}\to \Conid{L}\;\Varid{b}} (note that \ensuremath{\Varid{L}} is written in italic font, to distinguish it from the type constructor \ensuremath{\Conid{L}}).

The immediate sublists of a list can be specified in many ways.
We use the definition below mainly because it generates sublists in an intuivite order:
\begin{hscode}\SaveRestoreHook
\column{B}{@{}>{\hspre}l<{\hspost}@{}}%
\column{14}{@{}>{\hspre}l<{\hspost}@{}}%
\column{E}{@{}>{\hspre}l<{\hspost}@{}}%
\>[B]{}\Varid{subs}\mathbin{::}\Conid{L}\;\Varid{a}\to \Conid{L}\;(\Conid{L}\;\Varid{a}){}\<[E]%
\\
\>[B]{}\Varid{subs}\;[\mskip1.5mu \mskip1.5mu]{}\<[14]%
\>[14]{}\mathrel{=}[\mskip1.5mu \mskip1.5mu]{}\<[E]%
\\
\>[B]{}\Varid{subs}\;(\Varid{x}\mathbin{:}\Varid{xs}){}\<[14]%
\>[14]{}\mathrel{=}\Varid{L}\;(\Varid{x}\mathbin{:})\;(\Varid{subs}\;\Varid{xs})\mathbin{{+}\mskip-8mu{+}}[\mskip1.5mu \Varid{xs}\mskip1.5mu]~~.{}\<[E]%
\ColumnHook
\end{hscode}\resethooks
For example, \ensuremath{\Varid{subs}\;\text{\ttfamily \char34 abcde\char34}} yields \ensuremath{[\mskip1.5mu \text{\ttfamily \char34 abcd\char34},\text{\ttfamily \char34 abce\char34},\text{\ttfamily \char34 abde\char34},\text{\ttfamily \char34 acde\char34},\text{\ttfamily \char34 bcde\char34}\mskip1.5mu]}.

Denote the function we wish to compute by \ensuremath{\Varid{h}\mathbin{::}\Conid{L}\;\Conid{X}\to \Conid{Y}} for some types \ensuremath{\Conid{X}} and \ensuremath{\Conid{Y}}.
We assume that it is a partial function defined on non-empty lists, and can be computed top-down as below:
\begin{hscode}\SaveRestoreHook
\column{B}{@{}>{\hspre}l<{\hspost}@{}}%
\column{8}{@{}>{\hspre}l<{\hspost}@{}}%
\column{E}{@{}>{\hspre}l<{\hspost}@{}}%
\>[B]{}\Varid{h}\mathbin{::}\Conid{L}\;\Conid{X}\to \Conid{Y}{}\<[E]%
\\
\>[B]{}\Varid{h}\;[\mskip1.5mu \Varid{x}\mskip1.5mu]{}\<[8]%
\>[8]{}\mathrel{=}\Varid{f}\;\Varid{x}{}\<[E]%
\\
\>[B]{}\Varid{h}\;\Varid{xs}{}\<[8]%
\>[8]{}\mathrel{=}\Varid{g}\circo\Varid{L}\;\Varid{h}\circo\Varid{subs}\myapply\Varid{xs}~~,{}\<[E]%
\ColumnHook
\end{hscode}\resethooks
where \ensuremath{\Varid{f}\mathbin{::}\Conid{X}\to \Conid{Y}} is used in the base case when the input is a singleton list,
and \ensuremath{\Varid{g}\mathbin{::}\Conid{L}\;\Conid{Y}\to \Conid{Y}} is for the inductive case. The operator \ensuremath{(\myapply)} denotes function application, which binds looser than function composition \ensuremath{(\circo)}.
We sometimes use \ensuremath{(\myapply)} to reduce the number of parentheses.

For this pearl, it is convenient to use an equivalent definition.
Let \ensuremath{\Varid{td}} be a family of functions indexed by natural numbers (denoted by \ensuremath{\Conid{Nat}}):
\begin{hscode}\SaveRestoreHook
\column{B}{@{}>{\hspre}l<{\hspost}@{}}%
\column{11}{@{}>{\hspre}l<{\hspost}@{}}%
\column{E}{@{}>{\hspre}l<{\hspost}@{}}%
\>[B]{}\Varid{td}\mathbin{::}\Conid{Nat}\to \Conid{L}\;\Conid{X}\to \Conid{Y}{}\<[E]%
\\
\>[B]{}\Varid{td}\;\mathrm{0}{}\<[11]%
\>[11]{}\mathrel{=}\Varid{f}\circo\Varid{ex}{}\<[E]%
\\
\>[B]{}\Varid{td}\;(\mathrm{1}\mathbin{+}\Varid{n}){}\<[11]%
\>[11]{}\mathrel{=}\Varid{g}\circo\Varid{L}\;(\Varid{td}\;\Varid{n})\circo\Varid{subs}~~,{}\<[E]%
\ColumnHook
\end{hscode}\resethooks
Here the function \ensuremath{\Varid{ex}\mathbin{::}\Conid{L}\;\Varid{a}\to \Varid{a}} takes a singleton list and extracts the only component.
The intention is that \ensuremath{\Varid{td}\;\Varid{n}} is a function defined on lists of length exactly \ensuremath{\mathrm{1}\mathbin{+}\Varid{n}}.
Given input \ensuremath{\Varid{xs}}, the value we aim to compute is \ensuremath{\Varid{h}\;\Varid{xs}\mathrel{=}\Varid{td}\;(\Varid{length}\;\Varid{xs}\mathbin{-}\mathrm{1})\;\Varid{xs}}.
This definition will be handy later.

The function \ensuremath{\Varid{iter}\;\Varid{k}} composes a function with itself \ensuremath{\Varid{k}} times:
\begin{hscode}\SaveRestoreHook
\column{B}{@{}>{\hspre}l<{\hspost}@{}}%
\column{13}{@{}>{\hspre}l<{\hspost}@{}}%
\column{E}{@{}>{\hspre}l<{\hspost}@{}}%
\>[B]{}\Varid{iter}\mathbin{::}\Conid{Nat}\to (\Varid{a}\to \Varid{a})\to \Varid{a}\to \Varid{a}{}\<[E]%
\\
\>[B]{}\Varid{iter}\;\mathrm{0}\;{}\<[13]%
\>[13]{}\Varid{f}\mathrel{=}\Varid{id}{}\<[E]%
\\
\>[B]{}\Varid{iter}\;(\mathrm{1}\mathbin{+}\Varid{k})\;{}\<[13]%
\>[13]{}\Varid{f}\mathrel{=}\Varid{iter}\;\Varid{k}\;\Varid{f}\circo\Varid{f}~~.{}\<[E]%
\ColumnHook
\end{hscode}\resethooks
For brevity, we will write \ensuremath{\Varid{iter}\;\Varid{k}\;\Varid{f}} as \ensuremath{{\Varid{f}}^{\Varid{k}}} for the rest of this pearl.
The bottom-up algorithm we aim to construct has the following form:
\begin{hscode}\SaveRestoreHook
\column{B}{@{}>{\hspre}l<{\hspost}@{}}%
\column{E}{@{}>{\hspre}l<{\hspost}@{}}%
\>[B]{}\Varid{bu}\mathbin{::}\Conid{Nat}\to \Conid{L}\;\Conid{X}\to \Conid{Y}{}\<[E]%
\\
\>[B]{}\Varid{bu}\;\Varid{n}\mathrel{=}\Varid{post}\circo{\Varid{step}}^{\Varid{n}}\circo\Varid{pre}~~,{}\<[E]%
\ColumnHook
\end{hscode}\resethooks
where \ensuremath{\Varid{pre}} preprocesses the input and builds the lowest level in Figure \ref{fig:ch-lattice},
and each \ensuremath{\Varid{step}} builds a level from the one below.
For input of length \ensuremath{\mathrm{1}\mathbin{+}\Varid{n}} we repeat \ensuremath{\Varid{n}} times and, by then, we can extract the singleton value by \ensuremath{\Varid{post}}.

The aim of this pearl is to construct \ensuremath{\Varid{pre}}, \ensuremath{\Varid{step}}, and \ensuremath{\Varid{post}} such that \ensuremath{\Varid{td}\mathrel{=}\Varid{bu}}.


\section{Building a New Level}
\label{sec:build-level}

\newcommand{\dqoute}{\mathtt{"}}

To find out what \ensuremath{\Varid{step}} might be, we need to figure out how to specify a level, and what happens when a level is built from the one below it.
We use Figure~\ref{fig:ch-lattice} as our motivating example.
As one can see, level \ensuremath{\mathrm{2}} in Figure~\ref{fig:ch-lattice} consists of sublists of \ensuremath{\mathtt{\dqoute abcde\dqoute}} that have length \ensuremath{\mathrm{2}}, and level \ensuremath{\mathrm{3}} consists of sublists having length \ensuremath{\mathrm{3}}, and so on.
Let \ensuremath{\Varid{choose}\;\Varid{k}\;\Varid{xs}} denote choosing \ensuremath{\Varid{k}} elements from the list \ensuremath{\Varid{xs}}:
\begin{hscode}\SaveRestoreHook
\column{B}{@{}>{\hspre}l<{\hspost}@{}}%
\column{15}{@{}>{\hspre}l<{\hspost}@{}}%
\column{23}{@{}>{\hspre}c<{\hspost}@{}}%
\column{23E}{@{}l@{}}%
\column{26}{@{}>{\hspre}l<{\hspost}@{}}%
\column{E}{@{}>{\hspre}l<{\hspost}@{}}%
\>[B]{}\Varid{choose}\mathbin{::}\Conid{Nat}\to \Conid{L}\;\Varid{a}\to \Conid{L}\;(\Conid{L}\;\Varid{a}){}\<[E]%
\\
\>[B]{}\Varid{choose}\;\mathrm{0}\;{}\<[15]%
\>[15]{}\anonymous {}\<[23]%
\>[23]{}\mathrel{=}{}\<[23E]%
\>[26]{}[\mskip1.5mu [\mskip1.5mu \mskip1.5mu]\mskip1.5mu]{}\<[E]%
\\
\>[B]{}\Varid{choose}\;\Varid{k}\;{}\<[15]%
\>[15]{}\Varid{xs}{}\<[23]%
\>[23]{}\mid {}\<[23E]%
\>[26]{}\Varid{k}\doubleequals\Varid{length}\;\Varid{xs}\mathrel{=}[\mskip1.5mu \Varid{xs}\mskip1.5mu]{}\<[E]%
\\
\>[B]{}\Varid{choose}\;(\mathrm{1}\mathbin{+}\Varid{k})\;{}\<[15]%
\>[15]{}(\Varid{x}\mathbin{:}\Varid{xs}){}\<[23]%
\>[23]{}\mathrel{=}{}\<[23E]%
\>[26]{}\Varid{L}\;(\Varid{x}\mathbin{:})\;(\Varid{choose}\;\Varid{k}\;\Varid{xs})\mathbin{{+}\mskip-8mu{+}}\Varid{choose}\;(\mathrm{1}\mathbin{+}\Varid{k})\;\Varid{xs}~~.{}\<[E]%
\ColumnHook
\end{hscode}\resethooks
Its definition follows basic combinatorics: the only way to choose \ensuremath{\mathrm{0}} elements from a list is \ensuremath{[\mskip1.5mu \mskip1.5mu]}; if \ensuremath{\Varid{length}\;\Varid{xs}\mathrel{=}\Varid{k}}, the only way to choose \ensuremath{\Varid{k}} elements is \ensuremath{\Varid{xs}}. Otherwise, to choose \ensuremath{\mathrm{1}\mathbin{+}\Varid{k}} elements from \ensuremath{\Varid{x}\mathbin{:}\Varid{xs}}, one can either keep \ensuremath{\Varid{x}} and choose \ensuremath{\Varid{k}} from \ensuremath{\Varid{xs}}, or discard \ensuremath{\Varid{x}} and choose \ensuremath{\mathrm{1}\mathbin{+}\Varid{k}} elements from \ensuremath{\Varid{xs}}.
For example, \ensuremath{\Varid{choose}\;\mathrm{3}\;\mathtt{\dqoute abcde\dqoute}} yields
\ensuremath{[\mskip1.5mu \mathtt{\dqoute abc\dqoute},\mathtt{\dqoute abd\dqoute},\mathtt{\dqoute abe\dqoute},\mathtt{\dqoute acd\dqoute},\mathtt{\dqoute ace\dqoute},\mathtt{\dqoute ade\dqoute},} \ensuremath{\mathtt{\dqoute bcd\dqoute},\mathtt{\dqoute bce\dqoute},\mathtt{\dqoute bde\dqoute},\mathtt{\dqoute cde\dqoute}\mskip1.5mu]}.

Note that \ensuremath{\Varid{choose}\;\Varid{k}\;\Varid{xs}} is defined only when \ensuremath{\Varid{k}\leq \Varid{length}\;\Varid{xs}}.
Note also that \ensuremath{\Varid{subs}} is a special case of \ensuremath{\Varid{choose}} --- we have \ensuremath{\Varid{subs}\;\Varid{xs}\mathrel{=}\Varid{choose}\;(\Varid{length}\;\Varid{xs}\mathbin{-}\mathrm{1})\;\Varid{xs}}, a property we will need later.

If the levels in Figure~\ref{fig:ch-lattice} were to be represented as lists,
level \ensuremath{\Varid{k}} is given by \ensuremath{\Varid{L}\;\Varid{h}\;(\Varid{choose}\;\Varid{k}\;\Varid{xs})}.
For example, level \ensuremath{\mathrm{2}} in Figure~\ref{fig:ch-lattice} is (string literals are shown in typewriter font; double quotes are omitted to reduce noise in the presentation):
\begin{hscode}\SaveRestoreHook
\column{B}{@{}>{\hspre}l<{\hspost}@{}}%
\column{E}{@{}>{\hspre}l<{\hspost}@{}}%
\>[B]{}\Varid{L}\;\Varid{h}\;(\Varid{choose}\;\mathrm{2}\;\mathtt{abcde})\mathrel{=}[\mskip1.5mu \Varid{h}\;\mathtt{ab},\Varid{h}\;\mathtt{ac},\Varid{h}\;\mathtt{ad},\Varid{h}\;\mathtt{ae},\Varid{h}\;\mathtt{bc},\Varid{h}\;\mathtt{bd},\Varid{h}\;\mathtt{be},\Varid{h}\;\mathtt{cd},\Varid{h}\;\mathtt{ce},\Varid{h}\;\mathtt{de}\mskip1.5mu]~~.{}\<[E]%
\ColumnHook
\end{hscode}\resethooks
To build level \ensuremath{\mathrm{3}} from level \ensuremath{\mathrm{2}},
we wish to have a function \ensuremath{\Varid{upgrade}\mathbin{::}\Conid{L}\;\Conid{Y}\to \Conid{L}\;(\Conid{L}\;\Conid{Y})} that is able to somehow bring together the relevant entries from level \ensuremath{\mathrm{2}}:
\begin{hscode}\SaveRestoreHook
\column{B}{@{}>{\hspre}l<{\hspost}@{}}%
\column{3}{@{}>{\hspre}l<{\hspost}@{}}%
\column{4}{@{}>{\hspre}l<{\hspost}@{}}%
\column{E}{@{}>{\hspre}l<{\hspost}@{}}%
\>[3]{}\Varid{upgrade}\;(\Varid{L}\;\Varid{h}\;(\Varid{choose}\;\mathrm{2}\;\mathtt{abcde}))\mathrel{=}{}\<[E]%
\\
\>[3]{}\hsindent{1}{}\<[4]%
\>[4]{}[\mskip1.5mu [\mskip1.5mu \Varid{h}\;\mathtt{ab},\Varid{h}\;\mathtt{ac},\Varid{h}\;\mathtt{bc}\mskip1.5mu],[\mskip1.5mu \Varid{h}\;\mathtt{ab},\Varid{h}\;\mathtt{ad},\Varid{h}\;\mathtt{bd}\mskip1.5mu],[\mskip1.5mu \Varid{h}\;\mathtt{ab},\Varid{h}\;\mathtt{ae},\Varid{h}\;\mathtt{be}\mskip1.5mu]\mathbin{...}\mskip1.5mu]~~.{}\<[E]%
\ColumnHook
\end{hscode}\resethooks
With \ensuremath{[\mskip1.5mu \Varid{h}\;\mathtt{ab},\Varid{h}\;\mathtt{ac},\Varid{h}\;\mathtt{bc}\mskip1.5mu]} one can compute \ensuremath{\Varid{h}\;\mathtt{abc}}; with
\ensuremath{[\mskip1.5mu \Varid{h}\;\mathtt{ab},\Varid{h}\;\mathtt{ad},\Varid{h}\;\mathtt{bd}\mskip1.5mu]} one can compute \ensuremath{\Varid{h}\;\mathtt{abd}}, and so on.
That is, if we apply \ensuremath{\Varid{L}\;\Varid{g}} to the result of \ensuremath{\Varid{upgrade}} above, we get:
\begin{hscode}\SaveRestoreHook
\column{B}{@{}>{\hspre}l<{\hspost}@{}}%
\column{4}{@{}>{\hspre}l<{\hspost}@{}}%
\column{E}{@{}>{\hspre}l<{\hspost}@{}}%
\>[4]{}[\mskip1.5mu \Varid{h}\;\mathtt{abc},\Varid{h}\;\mathtt{abd},\Varid{h}\;\mathtt{abe},\Varid{h}\;\mathtt{acd}\mathbin{...}\mskip1.5mu]~~,{}\<[E]%
\ColumnHook
\end{hscode}\resethooks
which is level \ensuremath{\mathrm{3}}, or \ensuremath{\Varid{L}\;\Varid{h}\;(\Varid{choose}\;\mathrm{3}\;\mathtt{abcde})}.
The function \ensuremath{\Varid{upgrade}} needs not inspect the values of each element, but rearranges them by position --- it is a natural transformation \ensuremath{\Conid{L}\;\Varid{a}\to \Conid{L}\;(\Conid{L}\;\Varid{a})}.
As far as \ensuremath{\Varid{upgrade}} is concerned, it does not matter whether \ensuremath{\Varid{h}} is applied or not.
Letting \ensuremath{\Varid{h}\mathrel{=}\Varid{id}}, observe that \ensuremath{\Varid{upgrade}\;(\Varid{choose}\;\mathrm{2}\;\mathtt{abcde})\mathrel{=}}
\ensuremath{[\mskip1.5mu [\mskip1.5mu \mathtt{ab},\mathtt{ac},\mathtt{bc}\mskip1.5mu],[\mskip1.5mu \mathtt{ab},\mathtt{ad},\mathtt{bd}\mskip1.5mu]\mathbin{...}\mskip1.5mu]} and
\ensuremath{\Varid{choose}\;\mathrm{3}\;\mathtt{abcde}\mathrel{=}}  \ensuremath{[\mskip1.5mu \mathtt{abc},\mathtt{abd},\mathtt{abe},\mathtt{acd}\mathbin{...}\mskip1.5mu]}
are related by \ensuremath{\Varid{L}\;\Varid{subs}}:
each \ensuremath{\Varid{step}} we perform in the bottom-up algorithm could be \ensuremath{\Varid{L}\;\Varid{g}\circo\Varid{upgrade}}.

Formalising the observations above, we want \ensuremath{\Varid{upgrade}\mathbin{::}\Conid{L}\;\Varid{a}\to \Conid{L}\;(\Conid{L}\;\Varid{a})} to satisfy:
\begin{equation}
\label{eq:up-spec-list}
\begin{split}
&  \ensuremath{(\forall \Varid{xs}\hsforall ,\Varid{k}\mathbin{:}\mathrm{2}\leq \mathrm{1}\mathbin{+}\Varid{k}\leq \Varid{length}\;\Varid{xs}\mathbin{:}}\\
&   \qquad \ensuremath{\Varid{upgrade}\;(\Varid{choose}\;\Varid{k}\;\Varid{xs})\mathrel{=}\Varid{L}\;\Varid{subs}\;(\Varid{choose}\;(\mathrm{1}\mathbin{+}\Varid{k})\;\Varid{xs}))~~.}
\end{split}
\end{equation}
With this property, each \ensuremath{\Varid{step}} we perform in the bottom-up algorithm is \ensuremath{\Varid{L}\;\Varid{g}\circo\Varid{upgrade}}, which converts level \ensuremath{\Varid{k}} to level \ensuremath{\Varid{k}\mathbin{+}\mathrm{1}}:
\begin{hscode}\SaveRestoreHook
\column{B}{@{}>{\hspre}l<{\hspost}@{}}%
\column{5}{@{}>{\hspre}l<{\hspost}@{}}%
\column{9}{@{}>{\hspre}l<{\hspost}@{}}%
\column{E}{@{}>{\hspre}l<{\hspost}@{}}%
\>[5]{}\Varid{L}\;\Varid{g}\circo\Varid{upgrade}\circo\Varid{L}\;\Varid{h}\circo\Varid{choose}\;\Varid{k}{}\<[E]%
\\
\>[B]{}\mathrel{=}{}\<[9]%
\>[9]{}\mbox{\commentbegin  \ensuremath{\Varid{upgrade}} natural  \commentend}{}\<[E]%
\\
\>[B]{}\hsindent{5}{}\<[5]%
\>[5]{}\Varid{L}\;\Varid{g}\circo\Varid{L}\;(\Varid{L}\;\Varid{h})\circo\Varid{upgrade}\circo\Varid{choose}\;\Varid{k}{}\<[E]%
\\
\>[B]{}\mathrel{=}{}\<[9]%
\>[9]{}\mbox{\commentbegin  by \eqref{eq:up-spec-list}, \ensuremath{\Varid{map}}-fusion  \commentend}{}\<[E]%
\\
\>[B]{}\hsindent{5}{}\<[5]%
\>[5]{}\Varid{L}\;(\Varid{g}\circo\Varid{L}\;\Varid{h}\circo\Varid{subs})\circo\Varid{choose}\;(\mathrm{1}\mathbin{+}\Varid{k}){}\<[E]%
\\
\>[B]{}\mathrel{=}{}\<[9]%
\>[9]{}\mbox{\commentbegin  definition of \ensuremath{\Varid{h}}  \commentend}{}\<[E]%
\\
\>[B]{}\hsindent{5}{}\<[5]%
\>[5]{}\Varid{L}\;\Varid{h}\circo\Varid{choose}\;(\mathrm{1}\mathbin{+}\Varid{k})~~.{}\<[E]%
\ColumnHook
\end{hscode}\resethooks

We give some explanation on the constraints on \ensuremath{\Varid{k}} in \eqref{eq:up-spec-list}.
For \ensuremath{\Varid{choose}\;(\mathrm{1}\mathbin{+}\Varid{k})\;\Varid{xs}} on the RHS to be defined, we need \ensuremath{\mathrm{1}\mathbin{+}\Varid{k}\leq \Varid{length}\;\Varid{xs}}.
Meanwhile, no \ensuremath{\Varid{upgrade}} could satisfy \eqref{eq:up-spec-list} when \ensuremath{\Varid{k}\mathrel{=}\mathrm{0}}:
on the LHS, \ensuremath{\Varid{upgrade}} cannot distinguish between \ensuremath{\Varid{choose}\;\mathrm{0}\;\mathtt{ab}} and \ensuremath{\Varid{choose}\;\mathrm{0}\;\mathtt{abc}},
both evaluating to \ensuremath{[\mskip1.5mu [\mskip1.5mu \mskip1.5mu]\mskip1.5mu]}, while on the RHS \ensuremath{\Varid{choose}\;\mathrm{1}\;\mathtt{ab}} and \ensuremath{\Varid{choose}\;\mathrm{1}\;\mathtt{abc}} have different shapes.
Therefore we only demand \eqref{eq:up-spec-list} to hold when \ensuremath{\mathrm{1}\leq \Varid{k}},
which is sufficient because we only apply \ensuremath{\Varid{upgrade}} to level \ensuremath{\mathrm{1}} and above.
Together, the constraint is \ensuremath{\mathrm{2}\leq \mathrm{1}\mathbin{+}\Varid{k}\leq \Varid{length}\;\Varid{xs}} --- \ensuremath{\Varid{xs}} should have at least \ensuremath{\mathrm{2}} elements.

Can we construct such an \ensuremath{\Varid{upgrade}}?

\section{Building Levels Represented By Trees}

We may proceed with \eqref{eq:up-spec-list} and construct \ensuremath{\Varid{upgrade}}.
We will soon meet a small obstacle: in an inductive case
\ensuremath{\Varid{upgrade}} will receive a list computed by \ensuremath{\Varid{choose}\;(\mathrm{1}\mathbin{+}\Varid{k})\;(\Varid{x}\mathbin{:}\Varid{xs})} that
needs to be split into \ensuremath{\Varid{L}\;(\Varid{x}\mathbin{:})\;(\Varid{choose}\;\Varid{k}\;\Varid{xs})} and \ensuremath{\Varid{choose}\;(\mathrm{1}\mathbin{+}\Varid{k})\;\Varid{xs}}.
This can be done, but it is rather tedious.
This is a hint that some useful information has been lost when we represent levels by lists.
To make the job of \ensuremath{\Varid{upgrade}} easier, we switch to a more informative data structure.

\subsection{Binomial Trees}

Instead of lists, we define the following tip-valued binary tree:
\begin{hscode}\SaveRestoreHook
\column{B}{@{}>{\hspre}l<{\hspost}@{}}%
\column{E}{@{}>{\hspre}l<{\hspost}@{}}%
\>[B]{}\mathbf{data}\;\Conid{B}\;\Varid{a}\mathrel{=}\Conid{T}\;\Varid{a}\mid \Conid{N}\;(\Conid{B}\;\Varid{a})\;(\Conid{B}\;\Varid{a})~~.{}\<[E]%
\ColumnHook
\end{hscode}\resethooks
We assume that \ensuremath{\Conid{B}} is equipped with two functions derived from its definition:
\begin{hscode}\SaveRestoreHook
\column{B}{@{}>{\hspre}l<{\hspost}@{}}%
\column{8}{@{}>{\hspre}l<{\hspost}@{}}%
\column{E}{@{}>{\hspre}l<{\hspost}@{}}%
\>[B]{}\Varid{mapB}{}\<[8]%
\>[8]{}\mathbin{::}(\Varid{a}\to \Varid{b})\to \Conid{B}\;\Varid{a}\to \Conid{B}\;\Varid{b}~~,{}\<[E]%
\\
\>[B]{}\Varid{zipBW}{}\<[8]%
\>[8]{}\mathbin{::}(\Varid{a}\to \Varid{b}\to \Varid{c})\to \Conid{B}\;\Varid{a}\to \Conid{B}\;\Varid{b}\to \Conid{B}\;\Varid{c}~~.{}\<[E]%
\ColumnHook
\end{hscode}\resethooks
The function \ensuremath{\Varid{mapB}\;\Varid{f}} applies \ensuremath{\Varid{f}} to every tip of the given tree.
Like that with lists, we also write \ensuremath{\Varid{mapB}} as $\Varid{B}$.
Given two trees \ensuremath{\Varid{t}} and \ensuremath{\Varid{u}} having the same shape,
\ensuremath{\Varid{zipBW}\;\Varid{f}\;\Varid{t}\;\Varid{u}} ``zips'' the trees together, applying \ensuremath{\Varid{f}} to values on the tips.
If \ensuremath{\Varid{t}} and \ensuremath{\Varid{u}} have different shapes, \ensuremath{\Varid{zipBW}\;\Varid{f}\;\Varid{t}\;\Varid{u}} is undefined.

Having \ensuremath{\Conid{B}} allows us to define an alternative to \ensuremath{\Varid{choose}}:
\begin{hscode}\SaveRestoreHook
\column{B}{@{}>{\hspre}l<{\hspost}@{}}%
\column{11}{@{}>{\hspre}l<{\hspost}@{}}%
\column{19}{@{}>{\hspre}c<{\hspost}@{}}%
\column{19E}{@{}l@{}}%
\column{22}{@{}>{\hspre}l<{\hspost}@{}}%
\column{E}{@{}>{\hspre}l<{\hspost}@{}}%
\>[B]{}\Varid{ch}\mathbin{::}\Conid{Nat}\to \Conid{L}\;\Varid{a}\to \Conid{B}\;(\Conid{L}\;\Varid{a}){}\<[E]%
\\
\>[B]{}\Varid{ch}\;\mathrm{0}\;{}\<[11]%
\>[11]{}\anonymous {}\<[19]%
\>[19]{}\mathrel{=}{}\<[19E]%
\>[22]{}\Conid{T}\;[\mskip1.5mu \mskip1.5mu]{}\<[E]%
\\
\>[B]{}\Varid{ch}\;\Varid{k}\;{}\<[11]%
\>[11]{}\Varid{xs}{}\<[19]%
\>[19]{}\mid {}\<[19E]%
\>[22]{}\Varid{k}\doubleequals\Varid{length}\;\Varid{xs}\mathrel{=}\Conid{T}\;\Varid{xs}{}\<[E]%
\\
\>[B]{}\Varid{ch}\;(\mathrm{1}\mathbin{+}\Varid{k})\;{}\<[11]%
\>[11]{}(\Varid{x}\mathbin{:}\Varid{xs}){}\<[19]%
\>[19]{}\mathrel{=}{}\<[19E]%
\>[22]{}\Conid{N}\;(\Varid{B}\;(\Varid{x}\mathbin{:})\;(\Varid{ch}\;\Varid{k}\;\Varid{xs}))\;(\Varid{ch}\;(\mathrm{1}\mathbin{+}\Varid{k})\;\Varid{xs})~~.{}\<[E]%
\ColumnHook
\end{hscode}\resethooks
The function \ensuremath{\Varid{ch}} resembles \ensuremath{\Varid{choose}}.
In the first two clauses, \ensuremath{\Conid{T}} corresponds to a singleton list.
In the last clause, \ensuremath{\Varid{ch}} is like \ensuremath{\Varid{choose}} but, instead of appending the results of the two recursive calls, we store the results in the two branches of the binary tree, thus recording how the choices were made:
if \ensuremath{\Varid{ch}\;\anonymous \;(\Varid{x}\mathbin{:}\Varid{xs})\mathrel{=}\Conid{N}\;\Varid{t}\;\Varid{u}}, the subtree \ensuremath{\Varid{t}} contains all the tips with \ensuremath{\Varid{x}} chosen, while \ensuremath{\Varid{u}} contains all the tips with \ensuremath{\Varid{x}} discarded.

\begin{figure}
\centering
\begin{subfigure}[b]{0.25\textwidth}
  \centering
  \includegraphics[width=0.6\textwidth]{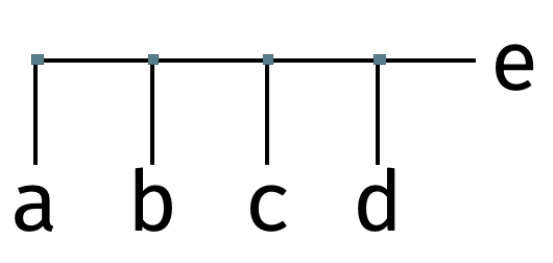}
  \caption{\ensuremath{\Varid{ch}\;\mathrm{1}\;\mathtt{abcde}}.}
  \label{fig:ch-1-5}
\end{subfigure}
\qquad
\begin{subfigure}[b]{0.45\textwidth}
  \centering
  \includegraphics[width=0.8\textwidth]{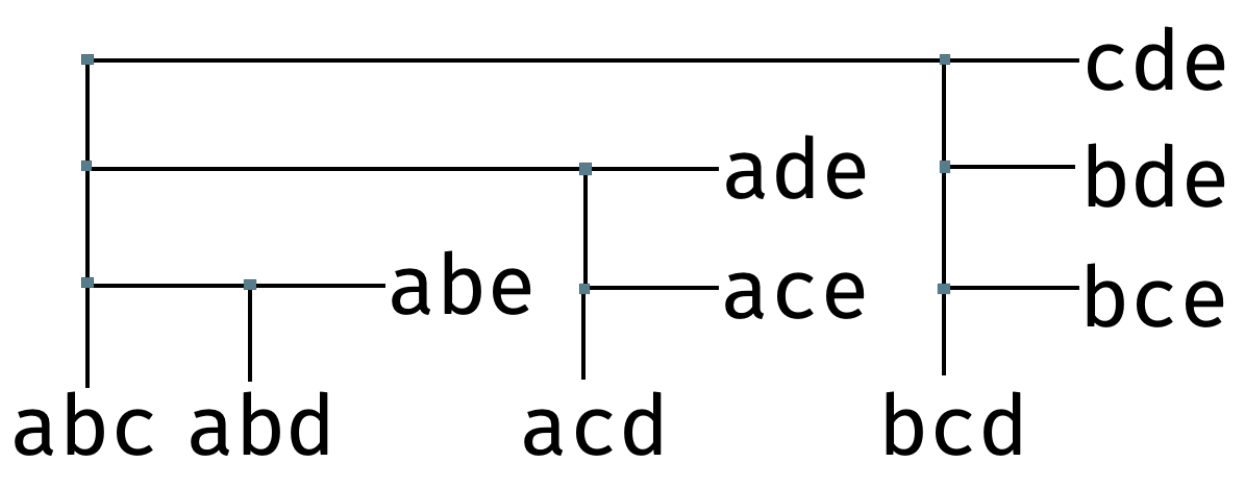}
  \caption{\ensuremath{\Varid{ch}\;\mathrm{3}\;\mathtt{abcde}}.}
  \label{fig:ch-3-5}
\end{subfigure}
\caption{Results of \ensuremath{\Varid{ch}}.}
\label{fig:ch-examples}
\end{figure}

The counterpart of \ensuremath{\Varid{upgrade}} on trees, which we will call \ensuremath{\Varid{up}}, is a natural transformation of type \ensuremath{\Conid{B}\;\Varid{a}\to \Conid{B}\;(\Conid{L}\;\Varid{a})}, satisfying the following property:
\begin{equation}
\label{eq:up-spec-B}
\begin{split}
&  \ensuremath{(\forall \Varid{xs}\hsforall ,\Varid{k}\mathbin{:}\mathrm{2}\leq \mathrm{1}\mathbin{+}\Varid{k}\leq \Varid{length}\;\Varid{xs}} \,: \\
&   \qquad \ensuremath{\Varid{up}\;(\Varid{ch}\;\Varid{k}\;\Varid{xs})\mathrel{=}\Varid{B}\;\Varid{subs}\;(\Varid{ch}\;(\mathrm{1}\mathbin{+}\Varid{k})\;\Varid{xs}))~~.}
\end{split}
\end{equation}

Now we are ready to derive \ensuremath{\Varid{up}}.

\subsection{The Derivation}

The derivation proceeds by trying to construct a proof of \eqref{eq:up-spec-B} and, when stuck, pausing to think about how \ensuremath{\Varid{up}} should be defined to allow the proof to go through.
That is, the definition of \ensuremath{\Varid{up}} and a proof that it satisfies \eqref{eq:up-spec-B} are developed hand-in-hand.

The proof, if constructed, will be an induction on \ensuremath{\Varid{xs}}.
The case analysis follows the shape of \ensuremath{\Varid{ch}\;(\mathrm{1}\mathbin{+}\Varid{k})\;\Varid{xs}} (on the RHS of \eqref{eq:up-spec-B}).
Therefore, there is a base case, a case when \ensuremath{\Varid{xs}} is non-empty and \ensuremath{\mathrm{1}\mathbin{+}\Varid{k}\mathrel{=}\Varid{length}\;\Varid{xs}}, and a case when \ensuremath{\mathrm{1}\mathbin{+}\Varid{k}\mathbin{<}\Varid{length}\;\Varid{xs}}.
However, since the constraints demand that \ensuremath{\Varid{xs}} has at least two elements, the base case will be lists of length \ensuremath{\mathrm{2}}, and in the inductive cases the length of the list will be at least \ensuremath{\mathrm{3}}.

\paragraph*{Case 1.~~} \ensuremath{\Varid{xs}\mathbin{:=}[\mskip1.5mu \Varid{y},\Varid{z}\mskip1.5mu]}.\\
The constraints force \ensuremath{\Varid{k}} to be \ensuremath{\mathrm{1}}.
We simplify the RHS of \eqref{eq:up-spec-B}:
\begin{hscode}\SaveRestoreHook
\column{B}{@{}>{\hspre}l<{\hspost}@{}}%
\column{3}{@{}>{\hspre}c<{\hspost}@{}}%
\column{3E}{@{}l@{}}%
\column{5}{@{}>{\hspre}l<{\hspost}@{}}%
\column{8}{@{}>{\hspre}l<{\hspost}@{}}%
\column{E}{@{}>{\hspre}l<{\hspost}@{}}%
\>[5]{}\Varid{B}\;\Varid{subs}\;(\Varid{ch}\;\mathrm{2}\;[\mskip1.5mu \Varid{y},\Varid{z}\mskip1.5mu]){}\<[E]%
\\
\>[3]{}\mathrel{=}{}\<[3E]%
\>[8]{}\mbox{\commentbegin  def. of \ensuremath{\Varid{ch}}  \commentend}{}\<[E]%
\\
\>[3]{}\hsindent{2}{}\<[5]%
\>[5]{}\Varid{B}\;\Varid{subs}\;(\Conid{T}\;[\mskip1.5mu \Varid{y},\Varid{z}\mskip1.5mu]){}\<[E]%
\\
\>[3]{}\mathrel{=}{}\<[3E]%
\>[8]{}\mbox{\commentbegin  def. of \ensuremath{\Varid{B}} and \ensuremath{\Varid{subs}}  \commentend}{}\<[E]%
\\
\>[3]{}\hsindent{2}{}\<[5]%
\>[5]{}\Conid{T}\;[\mskip1.5mu [\mskip1.5mu \Varid{y}\mskip1.5mu],[\mskip1.5mu \Varid{z}\mskip1.5mu]\mskip1.5mu]~~.{}\<[E]%
\ColumnHook
\end{hscode}\resethooks
Now consider the LHS:
\begin{hscode}\SaveRestoreHook
\column{B}{@{}>{\hspre}l<{\hspost}@{}}%
\column{3}{@{}>{\hspre}c<{\hspost}@{}}%
\column{3E}{@{}l@{}}%
\column{5}{@{}>{\hspre}l<{\hspost}@{}}%
\column{8}{@{}>{\hspre}l<{\hspost}@{}}%
\column{E}{@{}>{\hspre}l<{\hspost}@{}}%
\>[5]{}\Varid{up}\;(\Varid{ch}\;\mathrm{1}\;[\mskip1.5mu \Varid{y},\Varid{z}\mskip1.5mu]){}\<[E]%
\\
\>[3]{}\mathrel{=}{}\<[3E]%
\>[8]{}\mbox{\commentbegin  def. of \ensuremath{\Varid{ch}}  \commentend}{}\<[E]%
\\
\>[3]{}\hsindent{2}{}\<[5]%
\>[5]{}\Varid{up}\;(\Conid{N}\;(\Conid{T}\;[\mskip1.5mu \Varid{y}\mskip1.5mu])\;(\Conid{T}\;[\mskip1.5mu \Varid{z}\mskip1.5mu]))~~.{}\<[E]%
\ColumnHook
\end{hscode}\resethooks
The two sides can be made equal if we let \ensuremath{\Varid{up}\;(\Conid{N}\;(\Conid{T}\;\Varid{p})\;(\Conid{T}\;\Varid{q}))\mathrel{=}\Conid{T}\;[\mskip1.5mu \Varid{p},\Varid{q}\mskip1.5mu]}.

\paragraph*{\bf Case 2.~~} \ensuremath{\Varid{xs}\mathbin{:=}\Varid{x}\mathbin{:}\Varid{xs}} where \ensuremath{\Varid{length}\;\Varid{xs}\geq \mathrm{2}}, and \ensuremath{\mathrm{1}\mathbin{+}\Varid{k}\mathrel{=}\Varid{length}\;(\Varid{x}\mathbin{:}\Varid{xs})}.\\
We leave details of this case to the readers as an exercise, since we would prefer giving more attention to the next case.
For this case we will construct
\begin{hscode}\SaveRestoreHook
\column{B}{@{}>{\hspre}l<{\hspost}@{}}%
\column{E}{@{}>{\hspre}l<{\hspost}@{}}%
\>[B]{}\Varid{up}\;(\Conid{N}\;\Varid{t}\;(\Conid{T}\;\Varid{q}))\mathrel{=}\Conid{T}\;(\Varid{unT}\;(\Varid{up}\;\Varid{t})\mathbin{{+}\mskip-8mu{+}}[\mskip1.5mu \Varid{q}\mskip1.5mu])~~.{}\<[E]%
\ColumnHook
\end{hscode}\resethooks
In this case, \ensuremath{\Varid{up}\;\Varid{t}} always returns a \ensuremath{\Conid{T}}.
The function \ensuremath{\Varid{unT}\;(\Conid{T}\;\Varid{p})\mathrel{=}\Varid{p}} removes the constructor and exposes the list it contains.
While the correctness of this case is established by the constructed proof, a complementary explanation why \ensuremath{\Varid{up}\;\Varid{t}} always returns a singleton tree and thus \ensuremath{\Varid{unT}} always succeeds is given in Section~\ref{sec:deptypes}.

\paragraph*{\bf Case 3.~~} \ensuremath{\Varid{xs}\mathbin{:=}\Varid{x}\mathbin{:}\Varid{xs}}, \ensuremath{\Varid{k}\mathbin{:=}\mathrm{1}\mathbin{+}\Varid{k}}, where \ensuremath{\Varid{length}\;\Varid{xs}\geq \mathrm{2}}, and \ensuremath{\mathrm{1}\mathbin{+}(\mathrm{1}\mathbin{+}\Varid{k})\mathbin{<}\Varid{length}\;(\Varid{x}\mathbin{:}\Varid{xs})}.
\\
The constraints become \ensuremath{\mathrm{2}\leq \mathrm{2}\mathbin{+}\Varid{k}\mathbin{<}\Varid{length}\;(\Varid{x}\mathbin{:}\Varid{xs})}.
Again we start with the RHS, and try to reach the LHS:
\addtolength\jot{-2.2pt}
\begin{align*}
  & \ensuremath{\Varid{B}\;\Varid{subs}\;(\Varid{ch}\;(\mathrm{2}\mathbin{+}\Varid{k})\;(\Varid{x}\mathbin{:}\Varid{xs}))} \\
=~& \mbox{\color{SeaGreen}\quad\{ def. of \ensuremath{\Varid{ch}}, since \ensuremath{\mathrm{2}\mathbin{+}\Varid{k}\mathbin{<}\Varid{length}\;(\Varid{x}\mathbin{:}\Varid{xs})} \}} \\
  & \ensuremath{\Varid{B}\;\Varid{subs}\;(\Conid{N}\;(\Varid{B}\;(\Varid{x}\mathbin{:})\;(\Varid{ch}\;(\mathrm{1}\mathbin{+}\Varid{k})\;\Varid{xs}))\;(\Varid{ch}\;(\mathrm{2}\mathbin{+}\Varid{k})\;\Varid{xs}))} \\
=~& \mbox{\color{SeaGreen}\quad\{ def. of \ensuremath{\Varid{B}} \}}\\
  & \ensuremath{\Conid{N}\;(\Varid{B}\;(\Varid{subs}\circo(\Varid{x}\mathbin{:}))\;(\Varid{ch}\;(\mathrm{1}\mathbin{+}\Varid{k})\;\Varid{xs}))\;(\Varid{B}\;\Varid{subs}\;(\Varid{ch}\;(\mathrm{2}\mathbin{+}\Varid{k})\;\Varid{xs}))}\\
=~& \mbox{\color{SeaGreen}\quad\{ induction \}}\\
  & \ensuremath{\Conid{N}\;(\Varid{B}\;(\Varid{subs}\circo(\Varid{x}\mathbin{:}))\;(\Varid{ch}\;(\mathrm{1}\mathbin{+}\Varid{k})\;\Varid{xs}))\;(\Varid{up}\;(\Varid{ch}\;(\mathrm{1}\mathbin{+}\Varid{k})\;\Varid{xs}))}
    \mbox{~~.} \numberthis \label{eq:up3R}
\end{align*}
\addtolength\jot{2.2pt}%
Note that the induction step is valid because we are performing induction on \ensuremath{\Varid{xs}}, and thus \ensuremath{\Varid{k}} in \eqref{eq:up-spec-B} is universally quantified.
We now look at the LHS:
\addtolength\jot{-2.2pt}
\begin{align*}
  & \ensuremath{\Varid{up}\;(\Varid{ch}\;(\mathrm{1}\mathbin{+}\Varid{k})\;(\Varid{x}\mathbin{:}\Varid{xs}))} \\
=~& \mbox{\color{SeaGreen}\quad\{ def. of \ensuremath{\Varid{ch}}, since \ensuremath{\mathrm{1}\mathbin{+}\Varid{k}\mathbin{<}\Varid{length}\;(\Varid{x}\mathbin{:}\Varid{xs})} \}}\\
  & \ensuremath{\Varid{up}\;(\Conid{N}\;(\Varid{B}\;(\Varid{x}\mathbin{:})\;(\Varid{ch}\;\Varid{k}\;\Varid{xs}))\;(\Varid{ch}\;(\mathrm{1}\mathbin{+}\Varid{k})\;\Varid{xs}))}
     \mbox{~~.} \numberthis \label{eq:up3L}
\end{align*}
\addtolength\jot{2.2pt}%
Expressions \eqref{eq:up3R} and \eqref{eq:up3L} can be unified if we define
\begin{hscode}\SaveRestoreHook
\column{B}{@{}>{\hspre}l<{\hspost}@{}}%
\column{6}{@{}>{\hspre}l<{\hspost}@{}}%
\column{E}{@{}>{\hspre}l<{\hspost}@{}}%
\>[6]{}\Varid{up}\;(\Conid{N}\;\Varid{t}\;\Varid{u})\mathrel{=}\Conid{N}\mathbin{???}(\Varid{up}\;\Varid{u})~~.{}\<[E]%
\ColumnHook
\end{hscode}\resethooks
The missing part \ensuremath{\mathbin{???}} shall be an expression that is allowed to use only the two subtrees \ensuremath{\Varid{t}} and \ensuremath{\Varid{u}} that \ensuremath{\Varid{up}} receives.
Given \ensuremath{\Varid{t}\mathrel{=}\Varid{B}\;(\Varid{x}\mathbin{:})\;(\Varid{ch}\;\Varid{k}\;\Varid{xs})} and \ensuremath{\Varid{u}\mathrel{=}\Varid{ch}\;(\mathrm{1}\mathbin{+}\Varid{k})\;\Varid{xs}} (from \eqref{eq:up3L})
this expression shall evaluate to the subexpression in \eqref{eq:up3R}:
\begin{hscode}\SaveRestoreHook
\column{B}{@{}>{\hspre}l<{\hspost}@{}}%
\column{5}{@{}>{\hspre}l<{\hspost}@{}}%
\column{E}{@{}>{\hspre}l<{\hspost}@{}}%
\>[5]{}\Varid{B}\;(\Varid{subs}\circo(\Varid{x}\mathbin{:}))\;(\Varid{ch}\;(\mathrm{1}\mathbin{+}\Varid{k})\;\Varid{xs})~~.{}\<[E]%
\ColumnHook
\end{hscode}\resethooks

It may appear that, now that \ensuremath{\Varid{up}} already has \ensuremath{\Varid{u}\mathrel{=}\Varid{ch}\;(\mathrm{1}\mathbin{+}\Varid{k})\;\Varid{xs}}, the \ensuremath{\mathbin{???}} may simply be \ensuremath{\Varid{B}\;(\Varid{sub}\circo(\Varid{x}\mathbin{:}))\;\Varid{u}}. The problem is that the \ensuremath{\Varid{up}} does not know what \ensuremath{\Varid{x}} is --- unless \ensuremath{\Varid{k}\mathrel{=}\mathrm{0}}.

\paragraph*{Case 3.1.~~} \ensuremath{\Varid{k}\mathrel{=}\mathrm{0}}.
We can recover \ensuremath{\Varid{x}} from \ensuremath{\Varid{B}\;(\Varid{x}\mathbin{:})\;(\Varid{ch}\;\mathrm{0}\;\Varid{xs})} if \ensuremath{\Varid{k}} happens to be \ensuremath{\mathrm{0}} because:
\begin{hscode}\SaveRestoreHook
\column{B}{@{}>{\hspre}l<{\hspost}@{}}%
\column{6}{@{}>{\hspre}c<{\hspost}@{}}%
\column{6E}{@{}l@{}}%
\column{11}{@{}>{\hspre}l<{\hspost}@{}}%
\column{E}{@{}>{\hspre}l<{\hspost}@{}}%
\>[11]{}\Varid{B}\;(\Varid{x}\mathbin{:})\;(\Varid{ch}\;\mathrm{0}\;\Varid{xs}){}\<[E]%
\\
\>[6]{}\mathrel{=}{}\<[6E]%
\>[11]{}\Varid{B}\;(\Varid{x}\mathbin{:})\;(\Conid{T}\;[\mskip1.5mu \mskip1.5mu]){}\<[E]%
\\
\>[6]{}\mathrel{=}{}\<[6E]%
\>[11]{}\Conid{T}\;[\mskip1.5mu \Varid{x}\mskip1.5mu]~~.{}\<[E]%
\ColumnHook
\end{hscode}\resethooks
That is, the left subtree \ensuremath{\Varid{up}} receives must have the form \ensuremath{\Conid{T}\;[\mskip1.5mu \Varid{x}\mskip1.5mu]},
from which can retrieve \ensuremath{\Varid{x}} and apply \ensuremath{\Varid{B}\;(\Varid{sub}\circo(\Varid{x}\mathbin{:}))} to the other subtree.
We can furthermore simplify \ensuremath{\Varid{B}\;(\Varid{sub}\circo(\Varid{x}\mathbin{:}))\;(\Varid{ch}\;(\mathrm{1}\mathbin{+}\mathrm{0})\;\Varid{xs})} a bit:
\begin{hscode}\SaveRestoreHook
\column{B}{@{}>{\hspre}l<{\hspost}@{}}%
\column{6}{@{}>{\hspre}c<{\hspost}@{}}%
\column{6E}{@{}l@{}}%
\column{11}{@{}>{\hspre}l<{\hspost}@{}}%
\column{E}{@{}>{\hspre}l<{\hspost}@{}}%
\>[11]{}\Varid{B}\;(\Varid{subs}\circo(\Varid{x}\mathbin{:}))\;(\Varid{ch}\;(\mathrm{1}\mathbin{+}\mathrm{0})\;\Varid{xs}){}\<[E]%
\\
\>[6]{}\mathrel{=}{}\<[6E]%
\>[11]{}\Varid{B}\;(\lambda \Varid{q}\to [\mskip1.5mu [\mskip1.5mu \Varid{x}\mskip1.5mu],\Varid{q}\mskip1.5mu])\;(\Varid{ch}\;\mathrm{1}\;\Varid{xs})~~.{}\<[E]%
\ColumnHook
\end{hscode}\resethooks
The equality above holds because every tip in \ensuremath{\Varid{ch}\;\mathrm{1}\;\Varid{xs}} contains singleton lists and, for a singleton list \ensuremath{[\mskip1.5mu \Varid{z}\mskip1.5mu]}, we have \ensuremath{\Varid{subs}\;(\Varid{x}\mathbin{:}[\mskip1.5mu \Varid{z}\mskip1.5mu])\mathrel{=}[\mskip1.5mu [\mskip1.5mu \Varid{x}\mskip1.5mu],[\mskip1.5mu \Varid{z}\mskip1.5mu]\mskip1.5mu]}.
In summary, we have established
\begin{hscode}\SaveRestoreHook
\column{B}{@{}>{\hspre}l<{\hspost}@{}}%
\column{7}{@{}>{\hspre}l<{\hspost}@{}}%
\column{E}{@{}>{\hspre}l<{\hspost}@{}}%
\>[7]{}\Varid{up}\;(\Conid{N}\;(\Conid{T}\;\Varid{p})\;\Varid{u})\mathrel{=}\Conid{N}\;(\Varid{B}\;(\lambda \Varid{q}\to [\mskip1.5mu \Varid{p},\Varid{q}\mskip1.5mu])\;\Varid{u})\;(\Varid{up}\;\Varid{u})~~.{}\<[E]%
\ColumnHook
\end{hscode}\resethooks

\paragraph*{Case 3.2.~~} \ensuremath{\mathrm{0}\mathbin{<}\Varid{k}} (and \ensuremath{\Varid{k}\mathbin{<}\Varid{length}\;\Varid{xs}\mathbin{-}\mathrm{1}}).
In this more general case, we have to construct \ensuremath{\Varid{B}\;(\Varid{subs}\circo(\Varid{x}\mathbin{:}))\;(\Varid{ch}\;(\mathrm{1}\mathbin{+}\Varid{k})\;\Varid{xs})} out of the two subtrees, \ensuremath{\Varid{B}\;(\Varid{x}\mathbin{:})\;(\Varid{ch}\;\Varid{k}\;\Varid{xs})} and \ensuremath{\Varid{ch}\;(\mathrm{1}\mathbin{+}\Varid{k})\;\Varid{xs}}, without knowing what \ensuremath{\Varid{x}} is.

Starting calculation from \ensuremath{\Varid{B}\;(\Varid{subs}\circo(\Varid{x}\mathbin{:}))\;(\Varid{ch}\;(\mathrm{1}\mathbin{+}\Varid{k})\;\Varid{xs})},
we expect to use induction somewhere, therefore a possible strategy is to move \ensuremath{\Varid{B}\;\Varid{subs}} rightwards, closer to \ensuremath{\Varid{ch}}, in order to apply \eqref{eq:up-spec-B}.
Let us consider how to compute \ensuremath{\Varid{B}\;(\Varid{subs}\circo(\Varid{x}\mathbin{:}))\;\Varid{u}} for a general \ensuremath{\Varid{u}}, and try to move \ensuremath{\Varid{B}\;\Varid{subs}} closer to \ensuremath{\Varid{u}}.
Note that
\begin{itemize}
\item by definition, \ensuremath{\Varid{sub}\;(\Varid{x}\mathbin{:}\Varid{xs})\mathrel{=}\Varid{L}\;(\Varid{x}\mathbin{:})\;(\Varid{sub}\;\Varid{xs})\mathbin{{+}\mskip-8mu{+}}[\mskip1.5mu \Varid{xs}\mskip1.5mu]}.
\item Given a tree \ensuremath{\Varid{u}} and functions \ensuremath{\Varid{f}}, \ensuremath{\Varid{g}}, and \ensuremath{\Varid{h}},
by naturality of \ensuremath{\Varid{zipBW}} we have:
\begin{equation}
\label{eq:map-zipBW}
 \ensuremath{\Varid{B}\;(\lambda \Varid{z}\to \Varid{f}\;(\Varid{g}\;\Varid{z})\;(\Varid{h}\;\Varid{z}))\;\Varid{u}\mathrel{=}\Varid{zipBW}\;\Varid{f}\;(\Varid{B}\;\Varid{g}\;\Varid{u})\;(\Varid{B}\;\Varid{h}\;\Varid{u})~~.}
\end{equation}
\item Therefore, letting \ensuremath{\Varid{g}\mathrel{=}\Varid{L}\;(\Varid{x}\mathbin{:})\circo\Varid{subs}}, \ensuremath{\Varid{h}\mathrel{=}\Varid{id}}, and \ensuremath{\Varid{f}\mathrel{=}\Varid{snoc}} in \eqref{eq:map-zipBW}, where \ensuremath{\Varid{snoc}\;\Varid{ys}\;\Varid{z}\mathrel{=}\Varid{ys}\mathbin{{+}\mskip-8mu{+}}[\mskip1.5mu \Varid{z}\mskip1.5mu]}, we have:
\begin{equation}
\label{eq:map-sub-zipBW}
  \ensuremath{\Varid{B}\;(\Varid{subs}\circo(\Varid{x}\mathbin{:}))\;\Varid{u}\mathrel{=}\Varid{zipBW}\;\Varid{snoc}\;(\Varid{B}\;(\Varid{L}\;(\Varid{x}\mathbin{:})\circo\Varid{subs})\;\Varid{u})\;\Varid{u}~~.}
\end{equation}
\end{itemize}

We calculate:
\begin{hscode}\SaveRestoreHook
\column{B}{@{}>{\hspre}l<{\hspost}@{}}%
\column{5}{@{}>{\hspre}c<{\hspost}@{}}%
\column{5E}{@{}l@{}}%
\column{9}{@{}>{\hspre}l<{\hspost}@{}}%
\column{11}{@{}>{\hspre}l<{\hspost}@{}}%
\column{21}{@{}>{\hspre}l<{\hspost}@{}}%
\column{E}{@{}>{\hspre}l<{\hspost}@{}}%
\>[9]{}\Varid{B}\;(\Varid{subs}\circo(\Varid{x}\mathbin{:}))\;(\Varid{ch}\;(\mathrm{1}\mathbin{+}\Varid{k})\;\Varid{xs}){}\<[E]%
\\
\>[5]{}\mathrel{=}{}\<[5E]%
\>[11]{}\mbox{\commentbegin  by \eqref{eq:map-sub-zipBW}  \commentend}{}\<[E]%
\\
\>[5]{}\hsindent{4}{}\<[9]%
\>[9]{}\Varid{zipBW}\;\Varid{snoc}\;{}\<[21]%
\>[21]{}(\Varid{B}\;(\Varid{L}\;(\Varid{x}\mathbin{:})\circo\Varid{subs})\;(\Varid{ch}\;(\mathrm{1}\mathbin{+}\Varid{k})\;\Varid{xs}))\;(\Varid{ch}\;(\mathrm{1}\mathbin{+}\Varid{k})\;\Varid{xs}){}\<[E]%
\\
\>[5]{}\mathrel{=}{}\<[5E]%
\>[11]{}\mbox{\commentbegin  induction  \commentend}{}\<[E]%
\\
\>[5]{}\hsindent{4}{}\<[9]%
\>[9]{}\Varid{zipBW}\;\Varid{snoc}\;{}\<[21]%
\>[21]{}(\Varid{B}\;(\Varid{L}\;(\Varid{x}\mathbin{:}))\circo\Varid{up}\circo\Varid{ch}\;\Varid{k}\myapply\Varid{xs})\;(\Varid{ch}\;(\mathrm{1}\mathbin{+}\Varid{k})\;\Varid{xs}){}\<[E]%
\\
\>[5]{}\mathrel{=}{}\<[5E]%
\>[11]{}\mbox{\commentbegin  \ensuremath{\Varid{up}} natural  \commentend}{}\<[E]%
\\
\>[5]{}\hsindent{4}{}\<[9]%
\>[9]{}\Varid{zipBW}\;\Varid{snoc}\;(\Varid{up}\circo\Varid{B}\;(\Varid{x}\mathbin{:})\circo\Varid{ch}\;\Varid{k}\myapply\Varid{xs})\;(\Varid{ch}\;(\mathrm{1}\mathbin{+}\Varid{k})\;\Varid{xs})~~.{}\<[E]%
\ColumnHook
\end{hscode}\resethooks
Recall that our aim is to find a suitable definition of \ensuremath{\Varid{up}} such that \eqref{eq:up3R} equals \eqref{eq:up3L} .
The calculation shows that we may let
\begin{hscode}\SaveRestoreHook
\column{B}{@{}>{\hspre}l<{\hspost}@{}}%
\column{8}{@{}>{\hspre}l<{\hspost}@{}}%
\column{E}{@{}>{\hspre}l<{\hspost}@{}}%
\>[8]{}\Varid{up}\;(\Conid{N}\;\Varid{t}\;\Varid{u})\mathrel{=}\Conid{N}\;(\Varid{zipBW}\;\Varid{snoc}\;(\Varid{up}\;\Varid{t})\;\Varid{u})\;(\Varid{up}\;\Varid{u})~~.{}\<[E]%
\ColumnHook
\end{hscode}\resethooks

In summary, we have constructed:
\begin{hscode}\SaveRestoreHook
\column{B}{@{}>{\hspre}l<{\hspost}@{}}%
\column{14}{@{}>{\hspre}l<{\hspost}@{}}%
\column{21}{@{}>{\hspre}l<{\hspost}@{}}%
\column{E}{@{}>{\hspre}l<{\hspost}@{}}%
\>[B]{}\Varid{up}\mathbin{::}\Conid{B}\;\Varid{a}\to \Conid{B}\;(\Conid{L}\;\Varid{a}){}\<[E]%
\\
\>[B]{}\Varid{up}\;(\Conid{N}\;(\Conid{T}\;\Varid{p})\;{}\<[14]%
\>[14]{}(\Conid{T}\;\Varid{q}){}\<[21]%
\>[21]{})\mathrel{=}\Conid{T}\;[\mskip1.5mu \Varid{p},\Varid{q}\mskip1.5mu]{}\<[E]%
\\
\>[B]{}\Varid{up}\;(\Conid{N}\;\Varid{t}\;{}\<[14]%
\>[14]{}(\Conid{T}\;\Varid{q}){}\<[21]%
\>[21]{})\mathrel{=}\Conid{T}\;(\Varid{unT}\;(\Varid{up}\;\Varid{t})\mathbin{{+}\mskip-8mu{+}}[\mskip1.5mu \Varid{q}\mskip1.5mu]){}\<[E]%
\\
\>[B]{}\Varid{up}\;(\Conid{N}\;(\Conid{T}\;\Varid{p})\;{}\<[14]%
\>[14]{}\Varid{u}{}\<[21]%
\>[21]{})\mathrel{=}\Conid{N}\;(\Varid{B}\;(\lambda \Varid{q}\to [\mskip1.5mu \Varid{p},\Varid{q}\mskip1.5mu])\;\Varid{u})\;(\Varid{up}\;\Varid{u}){}\<[E]%
\\
\>[B]{}\Varid{up}\;(\Conid{N}\;\Varid{t}\;{}\<[14]%
\>[14]{}\Varid{u}{}\<[21]%
\>[21]{})\mathrel{=}\Conid{N}\;(\Varid{zipBW}\;\Varid{snoc}\;(\Varid{up}\;\Varid{t})\;\Varid{u})\;(\Varid{up}\;\Varid{u})~~.{}\<[E]%
\ColumnHook
\end{hscode}\resethooks
Using \ensuremath{(\mathbin{{+}\mskip-8mu{+}})} and \ensuremath{\Varid{snoc}} may look inefficient, but had we specified \ensuremath{\Varid{choose}} slightly differently, the \ensuremath{\Varid{up}} we derive would use \ensuremath{(\mathbin{:})} instead.
Again, we defined \ensuremath{\Varid{choose}} this way merely to generate sublists in an intuitive order.

\begin{figure}[h]
\centering
\includegraphics[width=0.9\textwidth]{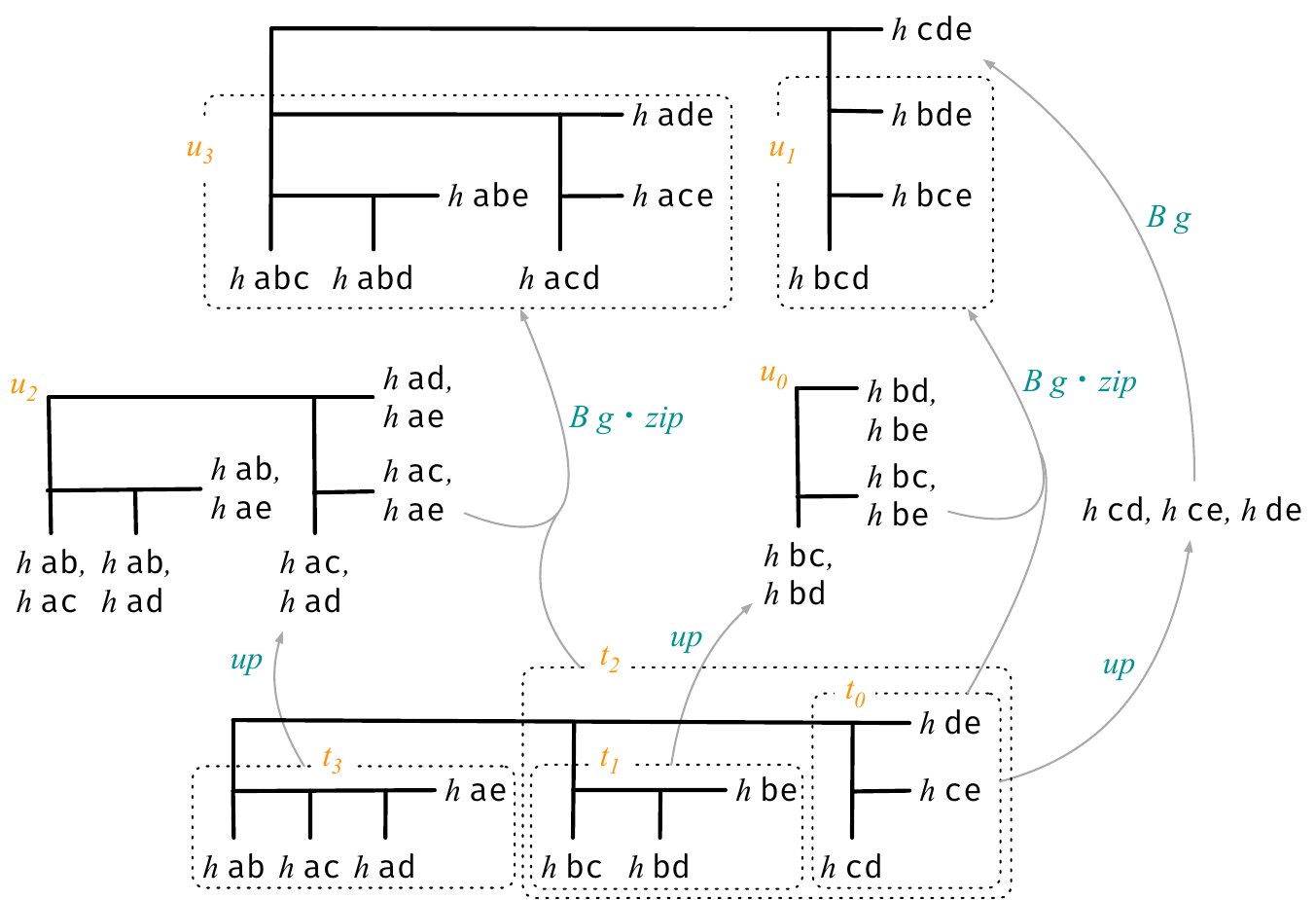}
\caption{Applying \ensuremath{\Varid{B}\;\Varid{g}\;\mathrel{\scalebox{0.6}{$\circ$}}\Varid{up}} to \ensuremath{\Varid{B}\;\Varid{h}\;(\Varid{ch}\;\mathrm{2}\;\mathtt{abcde})}. We abbreviate \ensuremath{\Varid{zipBW}\;\Varid{snoc}} to \ensuremath{\Varid{zip}}.}
\label{fig:up-2-3-demo}
\end{figure}

\paragraph*{An Example.}~~
To demonstrate how \ensuremath{\Varid{up}} works, shown at the bottom of Figure~\ref{fig:up-2-3-demo} is the tree built by \ensuremath{\Varid{B}\;\Varid{h}\;(\Varid{ch}\;\mathrm{2}\;\mathtt{abcde})}.
If we apply \ensuremath{\Varid{up}} to this tree,
the fourth clause of \ensuremath{\Varid{up}} is matched, and we traverse along its right spine until reaching \ensuremath{\Varid{t}_{0}},
which matches the second clause of \ensuremath{\Varid{up}}, and a singleton tree containing \ensuremath{[\mskip1.5mu \Varid{h}\;\mathtt{cd},\Varid{h}\;\mathtt{ce},\Varid{h}\;\mathtt{de}\mskip1.5mu]} is generated.

Traversing backwards, \ensuremath{\Varid{up}\;\Varid{t}_{1}} generates \ensuremath{\Varid{u}_{0}}, which shall have the same shape as \ensuremath{\Varid{t}_{0}} and can be zipped together to form \ensuremath{\Varid{u}_{1}}. Similarly, \ensuremath{\Varid{up}\;\Varid{t}_{3}} generates \ensuremath{\Varid{u}_{2}}, which shall have the same shape as \ensuremath{\Varid{t}_{2}}. Zipping them together, we get \ensuremath{\Varid{u}_{3}}. They constitute \ensuremath{\Varid{B}\;\Varid{h}\;(\Varid{ch}\;\mathrm{3}\;\mathtt{abcde})}, shown at the top of Figure~\ref{fig:up-2-3-demo}.

\subsection{Interlude: Shape Constraints with Dependent Types}
\label{sec:deptypes}

While the derivation guarantees that the function \ensuremath{\Varid{up}}, as defined above, satisfies \eqref{eq:up-spec-B}, the partiality of \ensuremath{\Varid{up}} still makes one uneasy.
Why is it that \ensuremath{\Varid{up}\;\Varid{t}} in the second clause always returns a \ensuremath{\Conid{T}}?
What guarantees that \ensuremath{\Varid{up}\;\Varid{t}} and \ensuremath{\Varid{u}} in the last clause always have the same shape and can be zipped together?
In this section we try to gain more understanding of the tree construction with the help of dependent types.


Certainly, \ensuremath{\Varid{ch}} does not generate all trees of type \ensuremath{\Conid{B}}, but only those trees having certain shapes.
We can talk about the shapes formally by annotating \ensuremath{\Conid{B}} with indices, as in the following Agda datatype:
\begin{hscode}\SaveRestoreHook
\column{B}{@{}>{\hspre}l<{\hspost}@{}}%
\column{3}{@{}>{\hspre}l<{\hspost}@{}}%
\column{7}{@{}>{\hspre}l<{\hspost}@{}}%
\column{E}{@{}>{\hspre}l<{\hspost}@{}}%
\>[B]{}\mathbf{data}\;\Conid{B}\;(\Varid{a}\mathbin{:}\Conid{Set})\mathbin{:}\mathbb{N}\to \mathbb{N}\to \Conid{Set}\;\mathbf{where}{}\<[E]%
\\
\>[B]{}\hsindent{3}{}\<[3]%
\>[3]{}\Conid{T}_{0}{}\<[7]%
\>[7]{}\mathbin{:}\Varid{a}\to \Conid{B}\;\Varid{a}\;\mathrm{0}\;\Varid{n}{}\<[E]%
\\
\>[B]{}\hsindent{3}{}\<[3]%
\>[3]{}\Conid{T}_{\Varid{n}}{}\<[7]%
\>[7]{}\mathbin{:}\Varid{a}\to \Conid{B}\;\Varid{a}\;(\Varid{suc}\;\Varid{n})\;(\Varid{suc}\;\Varid{n}){}\<[E]%
\\
\>[B]{}\hsindent{3}{}\<[3]%
\>[3]{}\Conid{N}{}\<[7]%
\>[7]{}\mathbin{:}\Conid{B}\;\Varid{a}\;\Varid{k}\;\Varid{n}\to \Conid{B}\;\Varid{a}\;(\Varid{suc}\;\Varid{k})\;\Varid{n}\to \Conid{B}\;\Varid{a}\;(\Varid{suc}\;\Varid{k})\;(\Varid{suc}\;\Varid{n})~~.{}\<[E]%
\ColumnHook
\end{hscode}\resethooks
The intention is that \ensuremath{\Conid{B}\;\Varid{a}\;\Varid{k}\;\Varid{n}} is the tree representing choosing \ensuremath{\Varid{k}} elements from a list of length \ensuremath{\Varid{n}}.
Notice that the changes of indices in \ensuremath{\Conid{B}} follow the definition of \ensuremath{\Varid{ch}}.
We now have two base cases, \ensuremath{\Conid{T}_{0}} and \ensuremath{\Conid{T}_{\Varid{n}}}, corresponding to choosing \ensuremath{\mathrm{0}} elements and all elements from a list.
A tree \ensuremath{\Conid{N}\;\Varid{t}\;\Varid{u}\mathbin{:}\Conid{B}\;\Varid{a}\;(\mathrm{1}\mathbin{+}\Varid{k})\;(\mathrm{1}\mathbin{+}\Varid{n})} represents choosing \ensuremath{\mathrm{1}\mathbin{+}\Varid{k}} elements from a list of length \ensuremath{\mathrm{1}\mathbin{+}\Varid{n}}, and the two ways to do so are \ensuremath{\Varid{t}\mathbin{:}\Conid{B}\;\Varid{a}\;\Varid{k}\;\Varid{n}} (choosing \ensuremath{\Varid{k}} from \ensuremath{\Varid{n}}) and \ensuremath{\Varid{u}\mathbin{:}\Conid{B}\;\Varid{a}\;(\mathrm{1}\mathbin{+}\Varid{k})\;\Varid{n}} (choosing \ensuremath{\mathrm{1}\mathbin{+}\Varid{k}} from \ensuremath{\Varid{n}}).
With the definition, \ensuremath{\Varid{ch}} may have type
\begin{hscode}\SaveRestoreHook
\column{B}{@{}>{\hspre}l<{\hspost}@{}}%
\column{E}{@{}>{\hspre}l<{\hspost}@{}}%
\>[B]{}\Varid{ch}\mathbin{:}(\Varid{k}\mathbin{:}\mathbb{N})\to \{\mskip1.5mu \Varid{n}\mathbin{:}\mathbb{N}\mskip1.5mu\}\to \Varid{k}\leq \Varid{n}\to \Conid{Vec}\;\Varid{a}\;\Varid{n}\to \Conid{B}\;(\Conid{Vec}\;\Varid{a}\;\Varid{k})\;\Varid{k}\;\Varid{n}~~,{}\<[E]%
\ColumnHook
\end{hscode}\resethooks
where \ensuremath{\Conid{Vec}\;\Varid{a}\;\Varid{n}} denotes a list (vector) of length \ensuremath{\Varid{n}}.

One can see that a pair of \ensuremath{(\Varid{k},\Varid{n})} uniquely determines the shape of the tree.
Furthermore, it can also be proved that if a tree \ensuremath{\Conid{B}\;\Varid{a}\;\Varid{k}\;\Varid{n}} can be built at all, it must be the case that \ensuremath{\Varid{k}\leq \Varid{n}}:
\begin{hscode}\SaveRestoreHook
\column{B}{@{}>{\hspre}l<{\hspost}@{}}%
\column{12}{@{}>{\hspre}l<{\hspost}@{}}%
\column{29}{@{}>{\hspre}l<{\hspost}@{}}%
\column{40}{@{}>{\hspre}l<{\hspost}@{}}%
\column{E}{@{}>{\hspre}l<{\hspost}@{}}%
\>[B]{}\Varid{bounded}{}\<[12]%
\>[12]{}\mathbin{:}\Conid{B}\;\Varid{a}\;\Varid{k}\;\Varid{n}{}\<[29]%
\>[29]{}\to \Varid{k}\leq \Varid{n}{}\<[40]%
\>[40]{}~~.{}\<[E]%
\ColumnHook
\end{hscode}\resethooks
The function \ensuremath{\Varid{unT}_{\Varid{n}}\mathbin{:}\Conid{B}\;\Varid{a}\;(\Varid{suc}\;\Varid{n})\;(\Varid{suc}\;\Varid{n})\to \Varid{a}} extracts the contents stored in a tip, and is only applied when we know, by the type, that the tree must be \ensuremath{\Conid{T}_{\Varid{n}}}.

\begin{figure}
\centering
\newcommand{\dash}{{\text{-}}}
{\small
\begin{hscode}\SaveRestoreHook
\column{B}{@{}>{\hspre}l<{\hspost}@{}}%
\column{6}{@{}>{\hspre}l<{\hspost}@{}}%
\column{7}{@{}>{\hspre}l<{\hspost}@{}}%
\column{9}{@{}>{\hspre}l<{\hspost}@{}}%
\column{10}{@{}>{\hspre}l<{\hspost}@{}}%
\column{16}{@{}>{\hspre}l<{\hspost}@{}}%
\column{24}{@{}>{\hspre}l<{\hspost}@{}}%
\column{30}{@{}>{\hspre}l<{\hspost}@{}}%
\column{36}{@{}>{\hspre}c<{\hspost}@{}}%
\column{36E}{@{}l@{}}%
\column{39}{@{}>{\hspre}l<{\hspost}@{}}%
\column{44}{@{}>{\hspre}l<{\hspost}@{}}%
\column{E}{@{}>{\hspre}l<{\hspost}@{}}%
\>[B]{}\Varid{up}\mathbin{:}\textcolor{Tan}{(\mathrm{0}\mathbin{<}\Varid{k})}\to \textcolor{Tan}{(\Varid{k}\mathbin{<}\Varid{n})}\to \Conid{B}\;\Varid{a}\;\Varid{k}\;\Varid{n}\to \Conid{B}\;(\Conid{Vec}\;\Varid{a}\;(\Varid{suc}\;\Varid{k}))\;(\Varid{suc}\;\Varid{k})\;\Varid{n}{}\<[E]%
\\
\>[B]{}\Varid{up}\;\textcolor{Tan}{\scaleobj{0.8}{\textcolor{Tan}{0{\small <}0}}}\;{}\<[16]%
\>[16]{}\anonymous \;{}\<[30]%
\>[30]{}(\Conid{T}_{0}\;\Varid{x}){}\<[44]%
\>[44]{}\mathrel{=}\textcolor{Tan}{{\textcolor{Tan}{\bot}\dash\Varid{elim}}}\;\textcolor{Tan}{(\scaleobj{0.8}{\textcolor{Tan}{{<}\dash\Varid{irrefl}}}\;\Varid{refl}\;\scaleobj{0.8}{\textcolor{Tan}{0{\small <}0}})}{}\<[E]%
\\
\>[B]{}\Varid{up}\;\anonymous \;{}\<[16]%
\>[16]{}\textcolor{Tan}{\scaleobj{0.8}{\textcolor{Tan}{1{+}n\!<\!1{+}n}}}\;{}\<[30]%
\>[30]{}(\Conid{T}_{\Varid{n}}\;\Varid{x}){}\<[44]%
\>[44]{}\mathrel{=}\textcolor{Tan}{{\textcolor{Tan}{\bot}\dash\Varid{elim}}}\;\textcolor{Tan}{(\scaleobj{0.8}{\textcolor{Tan}{{<}\dash\Varid{irrefl}}}\;\Varid{refl}\;\scaleobj{0.8}{\textcolor{Tan}{1{+}n\!<\!1{+}n}})}{}\<[E]%
\\
\>[B]{}\Varid{up}\;\anonymous \;{}\<[16]%
\>[16]{}\textcolor{Tan}{\scaleobj{0.8}{\textcolor{Tan}{2{+}n\!<\!2{+}n}}}\;{}\<[30]%
\>[30]{}(\Conid{N}\;(\Conid{T}_{\Varid{n}}\;\anonymous )\;\anonymous ){}\<[44]%
\>[44]{}\mathrel{=}\textcolor{Tan}{{\textcolor{Tan}{\bot}\dash\Varid{elim}}}\;\textcolor{Tan}{(\scaleobj{0.8}{\textcolor{Tan}{{<}\dash\Varid{irrefl}}}\;\Varid{refl}\;\scaleobj{0.8}{\textcolor{Tan}{2{+}n\!<\!2{+}n}})}{}\<[E]%
\\[\blanklineskip]%
\>[B]{}\Varid{up}\;\anonymous \;{}\<[7]%
\>[7]{}\anonymous \;{}\<[10]%
\>[10]{}(\Conid{N}\;(\Conid{T}_{0}\;\Varid{p})\;{}\<[24]%
\>[24]{}(\Conid{T}_{\Varid{n}}\;\Varid{q}){}\<[36]%
\>[36]{}){}\<[36E]%
\>[39]{}\mathrel{=}\Conid{T}_{\Varid{n}}\;(\Varid{p}\mathbin{::}\Varid{q}\mathbin{::}[\mskip1.5mu \mskip1.5mu]){}\<[E]%
\\
\>[B]{}\Varid{up}\;\anonymous \;{}\<[7]%
\>[7]{}\anonymous \;{}\<[10]%
\>[10]{}(\Conid{N}\;\Varid{t}\mathord{@}(\Conid{N}\;\anonymous \;\anonymous )\;{}\<[24]%
\>[24]{}(\Conid{T}_{\Varid{n}}\;\Varid{q}){}\<[36]%
\>[36]{}){}\<[36E]%
\>[39]{}\mathrel{=}\Conid{T}_{\Varid{n}}\;(\Varid{snoc}\;(\Varid{unT}_{\Varid{n}}\;(\Varid{up}\;\textcolor{Tan}{(\scaleobj{0.8}{\textcolor{Tan}{\mathsf{s{\leq}s}}}\;\scaleobj{0.8}{\textcolor{Tan}{\mathsf{z{\leq}n}}})}\;\textcolor{Tan}{(\scaleobj{0.8}{\textcolor{Tan}{\mathsf{s{\leq}s}}}\;\scaleobj{0.8}{\textcolor{Tan}{{\leq}\dash\Varid{refl}}})}\;\Varid{t}))\;\Varid{q}){}\<[E]%
\\
\>[B]{}\Varid{up}\;\anonymous \;{}\<[7]%
\>[7]{}\anonymous \;{}\<[10]%
\>[10]{}(\Conid{N}\;(\Conid{T}_{0}\;\Varid{p})\;{}\<[24]%
\>[24]{}\Varid{u}\mathord{@}(\Conid{N}\;\anonymous \;\Varid{u'}){}\<[36]%
\>[36]{}){}\<[36E]%
\>[39]{}\mathrel{=}\Conid{N}\;{}\<[44]%
\>[44]{}(\Varid{B}\;(\lambda \Varid{q}\to \Varid{p}\mathbin{::}\Varid{q}\mathbin{::}[\mskip1.5mu \mskip1.5mu])\;\Varid{u})\;(\Varid{up}\;\textcolor{Tan}{\scaleobj{0.8}{\textcolor{Tan}{{\leq}\dash\Varid{refl}}}}\;\textcolor{Tan}{(\scaleobj{0.8}{\textcolor{Tan}{\mathsf{s{\leq}s}}}\;(\Varid{bounded}\;\Varid{u'}))}\;\Varid{u}){}\<[E]%
\\
\>[B]{}\Varid{up}\;\anonymous \;\textcolor{Tan}{(\scaleobj{0.8}{\textcolor{Tan}{\mathsf{s{\leq}s}}}\;\scaleobj{0.8}{\textcolor{Tan}{1{+}k\!<\!1{+}n}})}\;(\Conid{N}\;\Varid{t}\mathord{@}(\Conid{N}\;\anonymous \;\anonymous )\;\Varid{u}\mathord{@}(\Conid{N}\;\anonymous \;\Varid{u'}))\mathrel{=}{}\<[E]%
\\
\>[B]{}\hsindent{6}{}\<[6]%
\>[6]{}\Conid{N}\;{}\<[9]%
\>[9]{}(\Varid{zipBW}\;\Varid{snoc}\;(\Varid{up}\;\textcolor{Tan}{(\scaleobj{0.8}{\textcolor{Tan}{\mathsf{s{\leq}s}}}\;\scaleobj{0.8}{\textcolor{Tan}{\mathsf{z{\leq}n}}})}\;\textcolor{Tan}{\scaleobj{0.8}{\textcolor{Tan}{1{+}k\!<\!1{+}n}}}\;\Varid{t})\;\Varid{u})\;(\Varid{up}\;\textcolor{Tan}{(\scaleobj{0.8}{\textcolor{Tan}{\mathsf{s{\leq}s}}}\;\scaleobj{0.8}{\textcolor{Tan}{\mathsf{z{\leq}n}}})}\;\textcolor{Tan}{(\scaleobj{0.8}{\textcolor{Tan}{\mathsf{s{\leq}s}}}\;(\Varid{bounded}\;\Varid{u'}))}\;\Varid{u}){}\<[E]%
\ColumnHook
\end{hscode}\resethooks
}
\caption{An Agda implementation of \ensuremath{\Varid{up}}.}
\label{fig:up-agda}
\end{figure}

Figure~\ref{fig:up-agda} shows an Agda implementation of \ensuremath{\Varid{up}}.
The type states that it is defined only for \ensuremath{\mathrm{0}\mathbin{<}\Varid{k}\mathbin{<}\Varid{n}};
the shape of its input tree is determined by \ensuremath{(\Varid{k},\Varid{n})}; the output tree has shape determined by \ensuremath{(\mathrm{1}\mathbin{+}\Varid{k},\Varid{n})}, and the values in the tree are lists of length \ensuremath{\mathrm{1}\mathbin{+}\Varid{k}}.

The first three clauses of \ensuremath{\Varid{up}} eliminate impossible cases.
The remaining four clauses are essentially the same as the non-dependently typed version,
modulo the additional arguments and proof terms, shown in light brown, that are needed to prove that \ensuremath{\Varid{k}} and \ensuremath{\Varid{n}} are within bounds.
In the clause that uses \ensuremath{\Varid{unT}}, the input tree has the form \ensuremath{\Conid{N}\;\Varid{t}\;(\Conid{T}_{\Varid{n}}\;\Varid{q})}.
The right subtree being a \ensuremath{\Conid{T}_{\Varid{n}}} forces the other subtree \ensuremath{\Varid{t}} to have type
\ensuremath{\Conid{B}\;\Varid{a}\;(\mathrm{1}\mathbin{+}\Varid{k})\;(\mathrm{2}\mathbin{+}\Varid{k})} --- the two indices must differ by \ensuremath{\mathrm{1}}. Therefore \ensuremath{\Varid{up}\;\Varid{t}} has type \ensuremath{\Conid{B}\;\Varid{a}\;(\mathrm{2}\mathbin{+}\Varid{k})\;(\mathrm{2}\mathbin{+}\Varid{k})} and must be built by \ensuremath{\Conid{T}_{\Varid{n}}}.
The last clause receives inputs having type \ensuremath{\Conid{B}\;\Varid{a}\;(\mathrm{2}\mathbin{+}\Varid{k})\;(\mathrm{2}\mathbin{+}\Varid{n})}. Both \ensuremath{\Varid{u}} and \ensuremath{\Varid{up}\;\Varid{t}} have types \ensuremath{\Conid{B}\mathbin{...}(\mathrm{2}\mathbin{+}\Varid{k})\;(\mathrm{1}\mathbin{+}\Varid{n})} and, therefore, have the same shape.

\begin{figure}[h]
\centering
\includegraphics[width=0.8\textwidth]{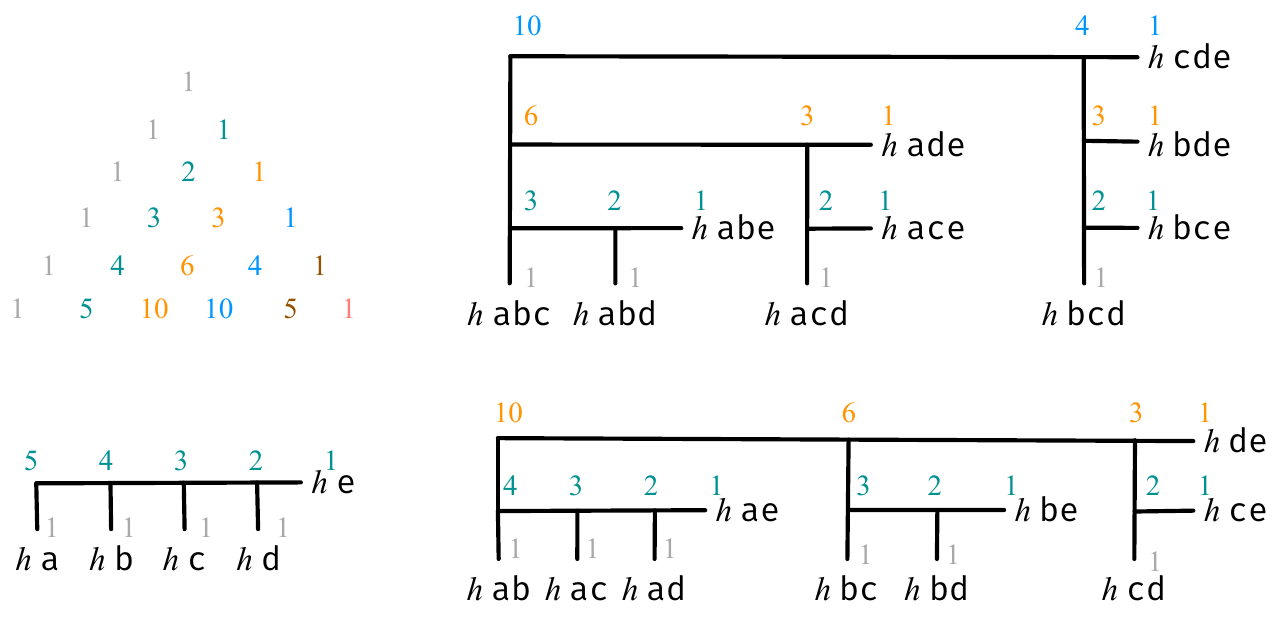}
\caption{Sizes of \ensuremath{\Conid{B}} alone the right spine correspond to diagonals in Pascal's Triangle.}
\label{fig:pascal-tri}
\end{figure}

\paragraph*{Pascal's Triangle.}~~
With so much discussion about choosing, it is perhaps not surprising that the sizes of subtrees along the right spine of a \ensuremath{\Conid{B}} tree correspond to diagonals in Pascal's Triangle.
After all, the \ensuremath{\Varid{k}}-th diagonal (counting from zero) in Pascal's Triangle denotes binomial coefficients --- the numbers of ways to choose \ensuremath{\Varid{k}} elements from \ensuremath{\Varid{k}}, \ensuremath{\Varid{k}\mathbin{+}\mathrm{1}}, \ensuremath{\Varid{k}\mathbin{+}\mathrm{2}}... elements.
This is probably why \cite{Bird:08:Zippy} calls the data structure a \emph{binomial tree}, hence the name \ensuremath{\Conid{B}}.
\footnote{It is not directly related to the tree, having the same name, used in \emph{binomial heaps}.}
See Figure~\ref{fig:pascal-tri} for example.
The sizes along the right spine of \ensuremath{\Varid{ch}\;\mathrm{2}\;\mathtt{abcde}}, that is, \ensuremath{\mathrm{10},\mathrm{6},\mathrm{3},\mathrm{1}}, is the second diagonal (in orange), while the right spine of \ensuremath{\Varid{ch}\;\mathrm{3}\;\mathtt{abcde}} is the fourth diagonal (in blue).
Applying \ensuremath{\Varid{up}} to a tree moves it rightwards and downwards.
In a sense, a \ensuremath{\Conid{B}} tree represents a diagonal in Pascal's Triangle \emph{with a proof} of how it is constructed.

\section{The Bottom-Up Algorithm}

Now that we have constructed an \ensuremath{\Varid{up}} that satisfies \eqref{eq:up-spec-B}, it is time to derive the main algorithm.
Recall that we have defined, in Section~\ref{sec:spec}, \ensuremath{\Varid{h}\;\Varid{xs}\mathrel{=}\Varid{td}\;(\Varid{length}\;\Varid{xs}\mathbin{-}\mathrm{1})\;\Varid{xs}}, where
\begin{hscode}\SaveRestoreHook
\column{B}{@{}>{\hspre}l<{\hspost}@{}}%
\column{11}{@{}>{\hspre}l<{\hspost}@{}}%
\column{E}{@{}>{\hspre}l<{\hspost}@{}}%
\>[B]{}\Varid{td}\mathbin{::}\Conid{Nat}\to \Conid{L}\;\Conid{X}\to \Conid{Y}{}\<[E]%
\\
\>[B]{}\Varid{td}\;\mathrm{0}{}\<[11]%
\>[11]{}\mathrel{=}\Varid{f}\circo\Varid{ex}{}\<[E]%
\\
\>[B]{}\Varid{td}\;(\mathrm{1}\mathbin{+}\Varid{n}){}\<[11]%
\>[11]{}\mathrel{=}\Varid{g}\circo\Varid{L}\;(\Varid{td}\;\Varid{n})\circo\Varid{subs}~~.{}\<[E]%
\ColumnHook
\end{hscode}\resethooks
The intention is that \ensuremath{\Varid{td}\;\Varid{n}} is a function defined for inputs of length exactly \ensuremath{\mathrm{1}\mathbin{+}\Varid{n}}.
We also define a variation:
\begin{hscode}\SaveRestoreHook
\column{B}{@{}>{\hspre}l<{\hspost}@{}}%
\column{12}{@{}>{\hspre}l<{\hspost}@{}}%
\column{E}{@{}>{\hspre}l<{\hspost}@{}}%
\>[B]{}\Varid{td'}\mathbin{::}\Conid{Nat}\to \Conid{L}\;\Conid{Y}\to \Conid{Y}{}\<[E]%
\\
\>[B]{}\Varid{td'}\;\mathrm{0}{}\<[12]%
\>[12]{}\mathrel{=}\Varid{ex}{}\<[E]%
\\
\>[B]{}\Varid{td'}\;(\mathrm{1}\mathbin{+}\Varid{n}){}\<[12]%
\>[12]{}\mathrel{=}\Varid{g}\circo\Varid{L}\;(\Varid{td'}\;\Varid{n})\circo\Varid{subs}~~.{}\<[E]%
\ColumnHook
\end{hscode}\resethooks
The difference is that \ensuremath{\Varid{td'}} calls only \ensuremath{\Varid{ex}} in the base case.
It takes only a routine induction to show that \ensuremath{\Varid{td}\;\Varid{n}\mathrel{=}\Varid{td'}\;\Varid{n}\circo\Varid{L}\;\Varid{f}}.
All the calls to \ensuremath{\Varid{f}} are thus factored to the beginning of the algorithm.
We will then be focusing on transforming \ensuremath{\Varid{td'}}.

Our aim is to show that \ensuremath{\Varid{td'}\;\Varid{n}} can be performed by \ensuremath{\Varid{n}} steps of \ensuremath{\Varid{B}\;\Varid{g}\circo\Varid{up}}, plus some pre and post processing.
Our derivation, however, has to introduce the last step (that is, the leftmost \ensuremath{\Varid{B}\;\Varid{g}\circo\Varid{up}}, when the steps are composed together) separately from the other steps.
We mentioned that \ensuremath{\Varid{subs}} is a special case of \ensuremath{\Varid{choose}}. To be more precise, for \ensuremath{\Varid{xs}} such that \ensuremath{\Varid{length}\;\Varid{xs}\mathrel{=}\mathrm{1}\mathbin{+}\Varid{n}} we have
\begin{align}
\label{eq:unT-up-choose}
\ensuremath{\Varid{subs}\;\Varid{xs}\mathrel{=}\Varid{unT}\circo\Varid{up}\circo\Varid{ch}\;\Varid{n}\myapply\Varid{xs}~~.}
\end{align}
For an example of \eqref{eq:unT-up-choose}, let \ensuremath{\Varid{xs}\mathrel{=}\mathtt{abcd}}. The LHS gives us \ensuremath{[\mskip1.5mu \mathtt{abc},\mathtt{abd},\mathtt{acd},\mathtt{bcd}\mskip1.5mu]}, while
in the RHS, \ensuremath{\Varid{ch}\;\Varid{n}} builds a tree with four tips, which will be joined by \ensuremath{\Varid{up}} to a singleton tree \ensuremath{\Conid{T}\;[\mskip1.5mu \mathtt{abc},\mathtt{abd},\mathtt{acd},\mathtt{bcd}\mskip1.5mu]}.
That \ensuremath{\Varid{up}} always returns a \ensuremath{\Conid{T}} can be seen from the annotated types discussed in Section \ref{sec:deptypes}: since \ensuremath{\Varid{ch}\;\Varid{n}} yields a tree having type \ensuremath{\Conid{B}\;\Varid{a}\;\Varid{n}\;(\mathrm{1}\mathbin{+}\Varid{n})}, \ensuremath{\Varid{up}} has to construct a tree of type \ensuremath{\Conid{B}\;\Varid{a}\;(\mathrm{1}\mathbin{+}\Varid{n})\;(\mathrm{1}\mathbin{+}\Varid{n})}, which must be a tip.

Now we calculate:
\begin{hscode}\SaveRestoreHook
\column{B}{@{}>{\hspre}l<{\hspost}@{}}%
\column{6}{@{}>{\hspre}l<{\hspost}@{}}%
\column{8}{@{}>{\hspre}l<{\hspost}@{}}%
\column{E}{@{}>{\hspre}l<{\hspost}@{}}%
\>[6]{}\Varid{td}\;(\mathrm{1}\mathbin{+}\Varid{n}){}\<[E]%
\\
\>[B]{}\mathrel{=}{}\<[8]%
\>[8]{}\mbox{\commentbegin  since \ensuremath{\Varid{td}\;\Varid{k}\mathrel{=}\Varid{td'}\;\Varid{k}\circo\Varid{L}\;\Varid{f}}  \commentend}{}\<[E]%
\\
\>[B]{}\hsindent{6}{}\<[6]%
\>[6]{}\Varid{td'}\;(\mathrm{1}\mathbin{+}\Varid{n})\circo\Varid{L}\;\Varid{f}{}\<[E]%
\\
\>[B]{}\mathrel{=}{}\<[8]%
\>[8]{}\mbox{\commentbegin  def. of \ensuremath{\Varid{td'}}  \commentend}{}\<[E]%
\\
\>[B]{}\hsindent{6}{}\<[6]%
\>[6]{}\Varid{g}\circo\Varid{L}\;(\Varid{td'}\;\Varid{n})\circo\Varid{subs}\circo\Varid{L}\;\Varid{f}{}\<[E]%
\\
\>[B]{}\mathrel{=}{}\<[8]%
\>[8]{}\mbox{\commentbegin  by \eqref{eq:unT-up-choose}  \commentend}{}\<[E]%
\\
\>[B]{}\hsindent{6}{}\<[6]%
\>[6]{}\Varid{g}\circo\Varid{L}\;(\Varid{td'}\;\Varid{n})\circo\Varid{unT}\circo\Varid{up}\circo\Varid{ch}\;(\mathrm{1}\mathbin{+}\Varid{n})\circo\Varid{L}\;\Varid{f}{}\<[E]%
\\
\>[B]{}\mathrel{=}{}\<[8]%
\>[8]{}\mbox{\commentbegin  naturality of \ensuremath{\Varid{unT}}  \commentend}{}\<[E]%
\\
\>[B]{}\hsindent{6}{}\<[6]%
\>[6]{}\Varid{unT}\circo\Varid{B}\;(\Varid{g}\circo\Varid{L}\;(\Varid{td'}\;\Varid{n}))\circo\Varid{up}\circo\Varid{ch}\;(\mathrm{1}\mathbin{+}\Varid{n})\circo\Varid{L}\;\Varid{f}{}\<[E]%
\\
\>[B]{}\mathrel{=}{}\<[8]%
\>[8]{}\mbox{\commentbegin  naturality of \ensuremath{\Varid{up}}  \commentend}{}\<[E]%
\\
\>[B]{}\hsindent{6}{}\<[6]%
\>[6]{}\Varid{unT}\circo\Varid{B}\;\Varid{g}\circo\Varid{up}\circo\Varid{B}\;(\Varid{td'}\;\Varid{n})\circo\Varid{ch}\;(\mathrm{1}\mathbin{+}\Varid{n})\circo\Varid{L}\;\Varid{f}~~.{}\<[E]%
\ColumnHook
\end{hscode}\resethooks
That gives us the last \ensuremath{\Varid{B}\;\Varid{g}\circo\Varid{up}}.

For the other steps, the following lemma shows that \ensuremath{\Varid{B}\;(\Varid{td'}\;\Varid{n})\circo\Varid{ch}\;(\mathrm{1}\mathbin{+}\Varid{n})} can be performed by \ensuremath{\Varid{n}} steps of \ensuremath{\Varid{B}\;\Varid{g}\circo\Varid{up}}, after some preprocessing.
This is the key lemma that relates \eqref{eq:up-spec-B} to the main algorithm.
\begin{lemma}\label{lma:main}
\ensuremath{\Varid{B}\;(\Varid{td'}\;\Varid{n})\circo\Varid{ch}\;(\mathrm{1}\mathbin{+}\Varid{n})\mathrel{=}{(\Varid{B}\;\Varid{g}\circo\Varid{up})}^{\Varid{n}}\circo\Varid{B}\;\Varid{ex}\circo\Varid{ch}\;\mathrm{1}}.
\end{lemma}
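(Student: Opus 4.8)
The plan is to prove the identity by induction on \ensuremath{\Varid{n}}, treating both sides as compositions of point-free functions. The two ingredients that make the induction go through are the defining property \eqref{eq:up-spec-B} of \ensuremath{\Varid{up}} and the naturality of \ensuremath{\Varid{up}} as a transformation \ensuremath{\Conid{B}\;\Varid{a}\to \Conid{B}\;(\Conid{L}\;\Varid{a})}; everything else is bookkeeping with the functoriality of \ensuremath{\Varid{B}} and the definition of \ensuremath{\Varid{iter}}. For the base case \ensuremath{\Varid{n}\mathrel{=}\mathrm{0}}, both sides collapse to \ensuremath{\Varid{B}\;\Varid{ex}\circo\Varid{ch}\;\mathrm{1}}: the left because \ensuremath{\Varid{td'}\;\mathrm{0}\mathrel{=}\Varid{ex}}, and the right because \ensuremath{{(\Varid{B}\;\Varid{g}\circo\Varid{up})}^{\mathrm{0}}\mathrel{=}\Varid{id}}.

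For the inductive step I would start from the left-hand side \ensuremath{\Varid{B}\;(\Varid{td'}\;(\mathrm{1}\mathbin{+}\Varid{n}))\circo\Varid{ch}\;(\mathrm{2}\mathbin{+}\Varid{n})} and massage it towards the right-hand side in five moves. First, unfold \ensuremath{\Varid{td'}\;(\mathrm{1}\mathbin{+}\Varid{n})\mathrel{=}\Varid{g}\circo\Varid{L}\;(\Varid{td'}\;\Varid{n})\circo\Varid{subs}} and distribute the map using functoriality of \ensuremath{\Varid{B}}, yielding \ensuremath{\Varid{B}\;\Varid{g}\circo\Varid{B}\;(\Varid{L}\;(\Varid{td'}\;\Varid{n}))\circo\Varid{B}\;\Varid{subs}\circo\Varid{ch}\;(\mathrm{2}\mathbin{+}\Varid{n})}. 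Second, read \eqref{eq:up-spec-B} from right to left with \ensuremath{\Varid{k}\mathbin{:=}\mathrm{1}\mathbin{+}\Varid{n}} to replace \ensuremath{\Varid{B}\;\Varid{subs}\circo\Varid{ch}\;(\mathrm{2}\mathbin{+}\Varid{n})} by \ensuremath{\Varid{up}\circo\Varid{ch}\;(\mathrm{1}\mathbin{+}\Varid{n})}. Third, push \ensuremath{\Varid{up}} leftwards past \ensuremath{\Varid{B}\;(\Varid{L}\;(\Varid{td'}\;\Varid{n}))} by naturality of \ensuremath{\Varid{up}}, which rewrites \ensuremath{\Varid{B}\;(\Varid{L}\;\phi)\circo\Varid{up}} as \ensuremath{\Varid{up}\circo\Varid{B}\;\phi}, producing \ensuremath{\Varid{B}\;\Varid{g}\circo\Varid{up}\circo\Varid{B}\;(\Varid{td'}\;\Varid{n})\circo\Varid{ch}\;(\mathrm{1}\mathbin{+}\Varid{n})}. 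Fourth, the tail \ensuremath{\Varid{B}\;(\Varid{td'}\;\Varid{n})\circo\Varid{ch}\;(\mathrm{1}\mathbin{+}\Varid{n})} is precisely the left-hand side of the induction hypothesis, so I substitute \ensuremath{{(\Varid{B}\;\Varid{g}\circo\Varid{up})}^{\Varid{n}}\circo\Varid{B}\;\Varid{ex}\circo\Varid{ch}\;\mathrm{1}} for it. Fifth, fold the leading \ensuremath{\Varid{B}\;\Varid{g}\circo\Varid{up}} back into the iterate to obtain \ensuremath{{(\Varid{B}\;\Varid{g}\circo\Varid{up})}^{\mathrm{1}\mathbin{+}\Varid{n}}\circo\Varid{B}\;\Varid{ex}\circo\Varid{ch}\;\mathrm{1}}, which is the right-hand side for \ensuremath{\mathrm{1}\mathbin{+}\Varid{n}}.

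I do not expect a genuine obstacle; the two points deserving care are small. The folding in the last move needs the observation that the calculation produces \ensuremath{\phi\circo{\phi}^{\Varid{n}}} with \ensuremath{\phi\mathrel{=}\Varid{B}\;\Varid{g}\circo\Varid{up}}, whereas the definition of \ensuremath{\Varid{iter}} gives \ensuremath{{\phi}^{\mathrm{1}\mathbin{+}\Varid{n}}\mathrel{=}{\phi}^{\Varid{n}}\circo\phi}; these agree because any function commutes with its own iterate, itself a one-line induction. The more delicate point is the side condition on \eqref{eq:up-spec-B}: invoking it with \ensuremath{\Varid{k}\mathbin{:=}\mathrm{1}\mathbin{+}\Varid{n}} requires \ensuremath{\mathrm{2}\leq \mathrm{2}\mathbin{+}\Varid{n}\leq \Varid{length}\;\Varid{xs}}. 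The lower bound holds for every \ensuremath{\Varid{n}}, and the upper bound holds automatically on any input for which \ensuremath{\Varid{ch}\;(\mathrm{2}\mathbin{+}\Varid{n})} is defined, so the rewrite is valid throughout the partial domain on which both sides of the lemma are defined, and the equational reasoning is sound.
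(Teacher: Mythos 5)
Your proposal is correct and follows essentially the same route as the paper's own proof: induction on \ensuremath{\Varid{n}}, unfolding \ensuremath{\Varid{td'}}, applying \eqref{eq:up-spec-B} right-to-left, shunting \ensuremath{\Varid{up}} leftwards by naturality, invoking the induction hypothesis, and folding the iterate. Your two ``points of care'' --- that \ensuremath{\phi\circo{\phi}^{\Varid{n}}\mathrel{=}{\phi}^{\mathrm{1}\mathbin{+}\Varid{n}}} needs the (easy) fact that a function commutes with its own iterate, and that the side condition of \eqref{eq:up-spec-B} is discharged on the domain where \ensuremath{\Varid{ch}\;(\mathrm{2}\mathbin{+}\Varid{n})} is defined --- are both handled correctly and are merely stated implicitly in the paper's proof.
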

\begin{proof}
For \ensuremath{\Varid{n}\mathbin{:=}\mathrm{0}} both sides simplify to \ensuremath{\Varid{B}\;\Varid{ex}\circo\Varid{ch}\;\mathrm{1}}. For \ensuremath{\Varid{n}\mathbin{:=}\mathrm{1}\mathbin{+}\Varid{n}}:
\begin{hscode}\SaveRestoreHook
\column{B}{@{}>{\hspre}l<{\hspost}@{}}%
\column{6}{@{}>{\hspre}l<{\hspost}@{}}%
\column{9}{@{}>{\hspre}l<{\hspost}@{}}%
\column{E}{@{}>{\hspre}l<{\hspost}@{}}%
\>[6]{}\Varid{B}\;(\Varid{td'}\;(\mathrm{1}\mathbin{+}\Varid{n}))\circo\Varid{ch}\;(\mathrm{2}\mathbin{+}\Varid{n}){}\<[E]%
\\
\>[B]{}\mathrel{=}{}\<[9]%
\>[9]{}\mbox{\commentbegin  def. of \ensuremath{\Varid{td'}}  \commentend}{}\<[E]%
\\
\>[B]{}\hsindent{6}{}\<[6]%
\>[6]{}\Varid{B}\;(\Varid{g}\circo\Varid{L}\;(\Varid{td'}\;\Varid{n})\circo\Varid{subs})\circo\Varid{ch}\;(\mathrm{2}\mathbin{+}\Varid{n}){}\<[E]%
\\
\>[B]{}\mathrel{=}{}\<[9]%
\>[9]{}\mbox{\commentbegin  by \eqref{eq:up-spec-B}  \commentend}{}\<[E]%
\\
\>[B]{}\hsindent{6}{}\<[6]%
\>[6]{}\Varid{B}\;(\Varid{g}\circo\Varid{L}\;(\Varid{td'}\;\Varid{n}))\circo\Varid{up}\circo\Varid{ch}\;(\mathrm{1}\mathbin{+}\Varid{n}){}\<[E]%
\\
\>[B]{}\mathrel{=}{}\<[9]%
\>[9]{}\mbox{\commentbegin  \ensuremath{\Varid{up}} natural  \commentend}{}\<[E]%
\\
\>[B]{}\hsindent{6}{}\<[6]%
\>[6]{}\Varid{B}\;\Varid{g}\circo\Varid{up}\circo\Varid{B}\;(\Varid{td'}\;\Varid{n})\circo\Varid{ch}\;(\mathrm{1}\mathbin{+}\Varid{n}){}\<[E]%
\\
\>[B]{}\mathrel{=}{}\<[9]%
\>[9]{}\mbox{\commentbegin  induction  \commentend}{}\<[E]%
\\
\>[B]{}\hsindent{6}{}\<[6]%
\>[6]{}\Varid{B}\;\Varid{g}\circo\Varid{up}\circo{(\Varid{B}\;\Varid{g}\circo\Varid{up})}^{\Varid{n}}\circo\Varid{B}\;\Varid{ex}\circo\Varid{ch}\;\mathrm{1}{}\<[E]%
\\
\>[B]{}\mathrel{=}{}\<[9]%
\>[9]{}\mbox{\commentbegin  \ensuremath{(\circo)} associative, def. of \ensuremath{{\Varid{f}}^{\Varid{n}}}  \commentend}{}\<[E]%
\\
\>[B]{}\hsindent{6}{}\<[6]%
\>[6]{}{(\Varid{B}\;\Varid{g}\circo\Varid{up})}^{\mathrm{1}\mathbin{+}\Varid{n}}\circo\Varid{B}\;\Varid{ex}\circo\Varid{ch}\;\mathrm{1}~~.{}\<[E]%
\ColumnHook
\end{hscode}\resethooks
\end{proof}

In summary, we have shown that:
\begin{theorem} For all \ensuremath{\Varid{n}\mathbin{::}\Conid{Nat}} we have \ensuremath{\Varid{td}\;\Varid{n}\mathrel{=}\Varid{bu}\;\Varid{n}}, where
\begin{hscode}\SaveRestoreHook
\column{B}{@{}>{\hspre}l<{\hspost}@{}}%
\column{E}{@{}>{\hspre}l<{\hspost}@{}}%
\>[B]{}\Varid{bu}\;\Varid{n}\mathrel{=}\Varid{unT}\circo{(\Varid{B}\;\Varid{g}\circo\Varid{up})}^{\Varid{n}}\circo\Varid{B}\;\Varid{ex}\circo\Varid{ch}\;\mathrm{1}\circo\Varid{L}\;\Varid{f}~~.{}\<[E]%
\ColumnHook
\end{hscode}\resethooks
\end{theorem}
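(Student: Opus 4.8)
The plan is to establish \ensuremath{\Varid{td}\;\Varid{n}\mathrel{=}\Varid{bu}\;\Varid{n}} by a case analysis on \ensuremath{\Varid{n}}, assembling the two ingredients already derived above: the displayed calculation that strips off the leftmost \ensuremath{\Varid{B}\;\Varid{g}\circo\Varid{up}} (yielding \ensuremath{\Varid{td}\;(\mathrm{1}\mathbin{+}\Varid{n})\mathrel{=}\Varid{unT}\circo\Varid{B}\;\Varid{g}\circo\Varid{up}\circo\Varid{B}\;(\Varid{td'}\;\Varid{n})\circo\Varid{ch}\;(\mathrm{1}\mathbin{+}\Varid{n})\circo\Varid{L}\;\Varid{f}}), and Lemma~\ref{lma:main}, which accounts for the remaining \ensuremath{\Varid{n}} steps. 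Because \ensuremath{\Varid{bu}} already carries \ensuremath{\Varid{L}\;\Varid{f}} on its right, I would hold \ensuremath{\Varid{L}\;\Varid{f}} fixed throughout and rewrite only the part to its left.

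For the base case \ensuremath{\Varid{n}\mathbin{:=}\mathrm{0}} the iterate \ensuremath{{(\Varid{B}\;\Varid{g}\circo\Varid{up})}^{\mathrm{0}}} is \ensuremath{\Varid{id}}, so \ensuremath{\Varid{bu}\;\mathrm{0}\mathrel{=}\Varid{unT}\circo\Varid{B}\;\Varid{ex}\circo\Varid{ch}\;\mathrm{1}\circo\Varid{L}\;\Varid{f}}. Since \ensuremath{\Varid{td}\;\mathrm{0}} is defined only on singletons, I would evaluate both sides at \ensuremath{[\mskip1.5mu \Varid{x}\mskip1.5mu]}: here \ensuremath{\Varid{L}\;\Varid{f}} gives \ensuremath{[\mskip1.5mu \Varid{f}\;\Varid{x}\mskip1.5mu]}, the full-length clause of \ensuremath{\Varid{ch}\;\mathrm{1}} produces \ensuremath{\Conid{T}\;[\mskip1.5mu \Varid{f}\;\Varid{x}\mskip1.5mu]}, and \ensuremath{\Varid{B}\;\Varid{ex}} followed by \ensuremath{\Varid{unT}} reduces the tip to \ensuremath{\Varid{f}\;\Varid{x}}, matching \ensuremath{\Varid{td}\;\mathrm{0}\;[\mskip1.5mu \Varid{x}\mskip1.5mu]\mathrel{=}\Varid{f}\;(\Varid{ex}\;[\mskip1.5mu \Varid{x}\mskip1.5mu])\mathrel{=}\Varid{f}\;\Varid{x}}.

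For the step case \ensuremath{\Varid{n}\mathbin{:=}\mathrm{1}\mathbin{+}\Varid{n}} I would take the stripped form of \ensuremath{\Varid{td}\;(\mathrm{1}\mathbin{+}\Varid{n})} above and rewrite its inner factor \ensuremath{\Varid{B}\;(\Varid{td'}\;\Varid{n})\circo\Varid{ch}\;(\mathrm{1}\mathbin{+}\Varid{n})} by Lemma~\ref{lma:main} into \ensuremath{{(\Varid{B}\;\Varid{g}\circo\Varid{up})}^{\Varid{n}}\circo\Varid{B}\;\Varid{ex}\circo\Varid{ch}\;\mathrm{1}}. Folding the standalone \ensuremath{\Varid{B}\;\Varid{g}\circo\Varid{up}} back into the iterate via the definition of \ensuremath{{\Varid{f}}^{\mathrm{1}\mathbin{+}\Varid{n}}} then gives \ensuremath{\Varid{unT}\circo{(\Varid{B}\;\Varid{g}\circo\Varid{up})}^{\mathrm{1}\mathbin{+}\Varid{n}}\circo\Varid{B}\;\Varid{ex}\circo\Varid{ch}\;\mathrm{1}\circo\Varid{L}\;\Varid{f}}, which is exactly \ensuremath{\Varid{bu}\;(\mathrm{1}\mathbin{+}\Varid{n})}.

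I expect no real obstacle, since the conceptual content lives in \eqref{eq:up-spec-B} and in Lemma~\ref{lma:main}, whose induction has already discharged the naturality and fusion steps. The care points are purely organisational: the base case must be checked separately, because the stripping calculation only speaks about inputs processed by \ensuremath{\Varid{ch}\;(\mathrm{1}\mathbin{+}\Varid{n})} and hence says nothing at \ensuremath{\Varid{n}\mathbin{:=}\mathrm{0}}; and one should note that the final \ensuremath{\Varid{unT}} is total, which the index analysis of Section~\ref{sec:deptypes} guarantees, as \ensuremath{\Varid{ch}\;\mathrm{1}} yields a tree of shape \ensuremath{\Conid{B}\;\Varid{a}\;\mathrm{1}\;(\mathrm{1}\mathbin{+}\Varid{n})} and each \ensuremath{\Varid{up}} raises the first index by one while preserving the second, so after \ensuremath{\Varid{n}} steps the tree has shape \ensuremath{\Conid{B}\;\Varid{a}\;(\mathrm{1}\mathbin{+}\Varid{n})\;(\mathrm{1}\mathbin{+}\Varid{n})} and is therefore a single tip.
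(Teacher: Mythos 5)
Your proposal is correct and follows essentially the same route as the paper's own proof: the base case is the paper's ``both sides reduce to \ensuremath{\Varid{f}\circo\Varid{ex}}'' spelled out pointwise at a singleton, and the inductive step is exactly the paper's chain --- the preceding ``stripping'' calculation, then Lemma~\ref{lma:main} applied to \ensuremath{\Varid{B}\;(\Varid{td'}\;\Varid{n})\circo\Varid{ch}\;(\mathrm{1}\mathbin{+}\Varid{n})}, then folding the leftover \ensuremath{\Varid{B}\;\Varid{g}\circo\Varid{up}} into the iterate by associativity and the definition of \ensuremath{{\Varid{f}}^{\Varid{n}}}. Your closing remark on the totality of \ensuremath{\Varid{unT}} via the shape indices is a nice extra observation the paper makes only informally elsewhere, but it is not needed for the equational argument.
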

\noindent That is, the top-down algorithm \ensuremath{\Varid{td}\;\Varid{n}} is equivalent to a bottom-up algorithm \ensuremath{\Varid{bu}\;\Varid{n}}, where the input is preprocessed by \ensuremath{\Varid{B}\;\Varid{ex}\circo\Varid{ch}\;\mathrm{1}\circo\Varid{L}\;\Varid{f}}, followed by \ensuremath{\Varid{n}} steps of \ensuremath{\Varid{B}\;\Varid{g}\circo\Varid{up}}. By then we will get a singleton tree, whose content can be extracted by \ensuremath{\Varid{unT}}.
The proof is merely putting all the pieces together.
\begin{proof}
For \ensuremath{\Varid{n}\mathbin{:=}\mathrm{0}}, both sides reduce to \ensuremath{\Varid{f}\circo\Varid{ex}}.
For \ensuremath{\Varid{n}\mathbin{:=}\mathrm{1}\mathbin{+}\Varid{n}}, we have
\begin{hscode}\SaveRestoreHook
\column{B}{@{}>{\hspre}l<{\hspost}@{}}%
\column{4}{@{}>{\hspre}l<{\hspost}@{}}%
\column{9}{@{}>{\hspre}l<{\hspost}@{}}%
\column{E}{@{}>{\hspre}l<{\hspost}@{}}%
\>[4]{}\Varid{td}\;(\mathrm{1}\mathbin{+}\Varid{n}){}\<[E]%
\\
\>[B]{}\mathrel{=}{}\<[9]%
\>[9]{}\mbox{\commentbegin  calculation before  \commentend}{}\<[E]%
\\
\>[B]{}\hsindent{4}{}\<[4]%
\>[4]{}\Varid{unT}\circo\Varid{B}\;\Varid{g}\circo\Varid{up}\circo\Varid{B}\;(\Varid{td'}\;\Varid{n})\circo\Varid{ch}\;(\mathrm{1}\mathbin{+}\Varid{n})\circo\Varid{L}\;\Varid{f}{}\<[E]%
\\
\>[B]{}\mathrel{=}{}\<[9]%
\>[9]{}\mbox{\commentbegin  Lemma~\ref{lma:main}  \commentend}{}\<[E]%
\\
\>[B]{}\hsindent{4}{}\<[4]%
\>[4]{}\Varid{unT}\circo\Varid{B}\;\Varid{g}\circo\Varid{up}\circo{(\Varid{B}\;\Varid{g}\circo\Varid{up})}^{\Varid{n}}\circo\Varid{B}\;\Varid{ex}\circo\Varid{ch}\;\mathrm{1}\circo\Varid{L}\;\Varid{f}{}\<[E]%
\\
\>[B]{}\mathrel{=}{}\<[9]%
\>[9]{}\mbox{\commentbegin  \ensuremath{(\circo)} associative, def. of \ensuremath{{\Varid{f}}^{\Varid{n}}}  \commentend}{}\<[E]%
\\
\>[B]{}\hsindent{4}{}\<[4]%
\>[4]{}\Varid{unT}\circo{(\Varid{B}\;\Varid{g}\circo\Varid{up})}^{\mathrm{1}\mathbin{+}\Varid{n}}\circo\Varid{B}\;\Varid{ex}\circo\Varid{ch}\;\mathrm{1}\circo\Varid{L}\;\Varid{f}{}\<[E]%
\\
\>[B]{}\mathrel{=}{}\<[9]%
\>[9]{}\mbox{\commentbegin  definition of \ensuremath{\Varid{bu}}  \commentend}{}\<[E]%
\\
\>[B]{}\hsindent{4}{}\<[4]%
\>[4]{}\Varid{bu}\;(\mathrm{1}\mathbin{+}\Varid{n})~~.{}\<[E]%
\ColumnHook
\end{hscode}\resethooks
\end{proof}

\section{Conclusion and Discussions}

The sublists problem was one of the examples of \cite{BirdHinze:03:Trouble}, a study of memoisation of functions, with a twist: the memo table is structured according to the call graph of the function, using trees of shared nodes (which they called \emph{nexuses}).
To solve the sublists problem, \cite{BirdHinze:03:Trouble} introduced a data structure, also called a ``binomial tree''. Whereas the binomial tree in~\cite{Bird:08:Zippy} and in this pearl models the structure of the function \ensuremath{\Varid{choose}}, that in \cite{BirdHinze:03:Trouble} can be said to model the function computing \emph{all} sublists:
\begin{hscode}\SaveRestoreHook
\column{B}{@{}>{\hspre}l<{\hspost}@{}}%
\column{18}{@{}>{\hspre}l<{\hspost}@{}}%
\column{E}{@{}>{\hspre}l<{\hspost}@{}}%
\>[B]{}\Varid{sublists}\;[\mskip1.5mu \mskip1.5mu]{}\<[18]%
\>[18]{}\mathrel{=}[\mskip1.5mu [\mskip1.5mu \mskip1.5mu]\mskip1.5mu]{}\<[E]%
\\
\>[B]{}\Varid{sublists}\;(\Varid{x}\mathbin{:}\Varid{xs}){}\<[18]%
\>[18]{}\mathrel{=}\Varid{map}\;(\Varid{x}\mathbin{:})\;(\Varid{sublists}\;\Varid{xs})\mathbin{{+}\mskip-8mu{+}}\Varid{sublists}\;\Varid{xs}~~.{}\<[E]%
\ColumnHook
\end{hscode}\resethooks
Such trees were then extended with up links (and became \emph{nexuses}). Trees were built in a top-down manner, creating carefully maintained links going up and down.

Bird then went on to study the relationship between top-down and bottom-up algorithms, and the sublists problem was one of the examples in \cite{Bird:08:Zippy} to be solved bottom-up. In \cite{Bird:08:Zippy}, a generic top-down algorithm is defined by:
\begin{hscode}\SaveRestoreHook
\column{B}{@{}>{\hspre}l<{\hspost}@{}}%
\column{E}{@{}>{\hspre}l<{\hspost}@{}}%
\>[B]{}\Varid{td}\mathbin{::}\Conid{L}\;\Conid{X}\to \Conid{Y}{}\<[E]%
\\
\>[B]{}\Varid{td}\;\Varid{xs}\mathrel{=}\mathbf{if}\;\Varid{sg}\;\Varid{xs}\;\mathbf{then}\;\Varid{f}\;(\Varid{ex}\;\Varid{xs})\;\mathbf{else}\;(\Varid{g}\circo\Varid{F}\;\Varid{td}\circo\Varid{dc}\myapply\Varid{xs})~~.{}\<[E]%
\ColumnHook
\end{hscode}\resethooks
In his setting, \ensuremath{\Conid{L}} is some input data structure that is often a list in examples, but need not be so. The function \ensuremath{\Varid{sg}\mathbin{::}\Conid{L}\;\Varid{a}\to \Conid{Bool}} determines whether an \ensuremath{\Conid{L}} structure is a singleton, whose content can be extracted by \ensuremath{\Varid{ex}\mathbin{::}\Conid{L}\;\Varid{a}\to \Varid{a}}.
The function \ensuremath{\Varid{dc}\mathbin{::}\Conid{L}\;\Varid{a}\to \Conid{F}\;(\Conid{L}\;\Varid{a})} decomposes an \ensuremath{\Conid{L}} into an \ensuremath{\Conid{F}} structure of \ensuremath{\Conid{L}}s, to be recursively processed.
In the simplest example, \ensuremath{\Conid{L}} is the type of lists, \ensuremath{\Conid{F}\;\Varid{a}\mathrel{=}(\Varid{a},\Varid{a})}, and \ensuremath{\Varid{dc}\;\Varid{xs}\mathrel{=}(\Varid{init}\;\Varid{xs},\Varid{tail}\;\Varid{xs})} (e.g. \ensuremath{\Varid{dc}\;\mathtt{abcd}\mathrel{=}(\mathtt{abc},\mathtt{bcd})}).

A simplified version of Bird's generic bottom-up algorithm, without the nexus, is something like: \ensuremath{\Varid{bu}\mathrel{=}\Varid{ex}\circo{(\Varid{L}\;\Varid{g}\circo\Varid{cd})}^{*}\circo\Varid{L}\;\Varid{f}}.
The pre and postprocessing are respectively \ensuremath{\Varid{L}\;\Varid{f}} and \ensuremath{\Varid{ex}}, while \ensuremath{\Varid{L}\;\Varid{g}\circo\Varid{cd}} is repeatedly performed (via the "\ensuremath{\mathbin{*}}") until we have a singleton.
The function \ensuremath{\Varid{cd}\mathbin{::}\Conid{L}\;\Varid{a}\to \Conid{L}\;(\Conid{F}\;\Varid{a})} transforms one level to the next.
Note that its return type is symmetric to that of \ensuremath{\Varid{dc}} (hence the name \ensuremath{\Varid{cd}}, probably). For the \ensuremath{\Varid{dc}} above, we let \ensuremath{\Varid{cd}} be the function that combines adjacent elements of a list into pairs, e.g., \ensuremath{\Varid{cd}\;\mathtt{abcd}\mathrel{=}[\mskip1.5mu (\mathtt{a},\Varid{b}),(\mathtt{b},\mathtt{c}),(\mathtt{c},\mathtt{d})\mskip1.5mu]}.

While diagrams such as Figure~\ref{fig:ch-lattice} may help one to see how a bottom-up algorithm works, to understand how a top-down algorithm is transformed to a bottom-up one,
it may be it more helpful to think in terms of right-to-left function composition.
Bird's top-down algorithm, when expanded, has the form

\begin{equation}
\label{eq:td-expanded}
 \ensuremath{\Varid{g}\circo\Varid{F}\;\Varid{g}~\mathbin{...}~{\Varid{F}}^{n-1}\;\Varid{g}\circo\textcolor{orange}{{\Varid{F}}^{n}}\;(\Varid{f}\circo\textcolor{orange}{\Varid{ex}})\circo\textcolor{orange}{{\Varid{F}}^{n-1}\;\Varid{dc}}~\mathbin{...}~{\Varid{F}}^2\;\Varid{dc}\circo\Varid{F}\;\Varid{dc}\circo\Varid{dc}~~.}
\end{equation}
Several crucial properties are needed to turn \ensuremath{\Varid{td}} into \ensuremath{\Varid{bu}}.
Among them, Bird needed \ensuremath{\Varid{F}\;\Varid{ex}\circo\Varid{dc}\mathrel{=}\Varid{ex}\circo\Varid{cd}} and \ensuremath{\Varid{dc}\circo\Varid{cd}\mathrel{=}\Varid{F}\;\Varid{cd}\circo\Varid{dc}}.
The first property turns the inner \ensuremath{\textcolor{orange}{{\Varid{F}}^{n}\;\Varid{ex}\circo{\Varid{F}}^{n-1}\;\Varid{dc}}} into \ensuremath{{\Varid{F}}^{n-1}\;(\Varid{ex}\circo\Varid{cd})}, while the second swaps \ensuremath{\Varid{cd}} to the rightmost position.
Function calls to \ensuremath{\Varid{f}} and \ensuremath{\Varid{g}} are shunted to the right by naturalty.
That yields \emph{one} \ensuremath{\Varid{L}\;\Varid{g}\circo\Varid{cd}}.
The process needs to be repeated to create more \ensuremath{\Varid{L}\;\Varid{g}\circo\Varid{cd}}.
Therefore Bird used two inductive proofs to show that \ensuremath{\Varid{td}\mathrel{=}\Varid{bu}}.

The sublists problem, however, does not fit into this framework very well.
While \ensuremath{\Varid{dc}} (which is our \ensuremath{\Varid{subs}}) has type \ensuremath{\Conid{L}\;\Varid{a}\to \Conid{L}\;(\Conid{L}\;\Varid{a})} in the specification,
Bird noticed that we need binomial trees to enable the bottom-up construction, therefore \ensuremath{\Varid{cd}} (our \ensuremath{\Varid{up}}) has type \ensuremath{\Conid{B}\;\Varid{a}\to \Conid{B}\;(\Conid{L}\;\Varid{a})}.
Rather than constructing \ensuremath{\Varid{cd}} from a specification having \ensuremath{\Varid{dc}},
Bird introduced \ensuremath{\Varid{cd}} out of the blue, before introducing another equally cryptic \ensuremath{\Varid{dc'}\mathbin{::}\Conid{B}\;\Varid{a}\to \Conid{L}\;(\Conid{B}\;\Varid{a})} and claiming that \ensuremath{\Varid{dc'}\circo\Varid{cd}\mathrel{=}\Varid{F}\;\Varid{cd}\circo\Varid{dc'}}.

In this pearl we reviewed this problem from the basics,
and instead proposed \eqref{eq:up-spec-B} as a specification of \ensuremath{\Varid{up}}, as well as the property that drives the entire derivation.
Look at the expanded top-down algorithm again:
\begin{equation*}
 \ensuremath{\Varid{g}\circo\Varid{B}\;\Varid{g}~\mathbin{...}~{\Varid{B}}^{n-1}\;\Varid{g}\circo{\Varid{B}}^{n}\;(\Varid{f}\circo\Varid{ex})\circo{\Varid{B}}^{n-1}\;\Varid{subs}~\mathbin{...}~{\Varid{B}}^2\;\Varid{subs}\circo\textcolor{orange}{\Varid{B}\;\Varid{subs}\circo\Varid{subs}}~~,}
\end{equation*}
(The above is \eqref{eq:td-expanded} with \ensuremath{(\Varid{F},\Varid{dc})\mathbin{:=}(\Varid{B},\Varid{subs})}.)
Property \eqref{eq:up-spec-B} turns the \emph{outermost}
\ensuremath{\textcolor{orange}{\Varid{B}\;\Varid{subs}\circo\Varid{subs}}}, which is \ensuremath{\Varid{B}\;\Varid{sub}\circo\Varid{ch}\;(\Varid{n}\mathbin{-}\mathrm{1})}, into \ensuremath{\Varid{up}\circo\Varid{ch}\;(\Varid{n}\mathbin{-}\mathrm{2})}.
That is, \ensuremath{\Varid{up}} is generated from the outside, before being shunted leftwards using naturality.
This fits the problem better: we do not need a \ensuremath{\Varid{dc'}\mathbin{::}\Conid{B}\;\Varid{a}\to \Conid{L}\;(\Conid{B}\;\Varid{a})},
and we need only one inductive proof.

The moral of this story is that while many bottom-up algorithms look alike --- they all have the form \ensuremath{\Varid{post}\circo{\Varid{step}}^{*}\circo\Varid{pre}}, the reason why they work could be very different.
It is likely that there are more patterns yet to be discovered.

\paragraph*{Acknowledgements}~ The author would like to thank Hsiang-Shang Ko and Jeremy Gibbons for many in-depth discussions throughout the development of this work, Conor McBride for discussion at a WG 2.1 meeting, and Yu-Hsuan Wu and Chung-Yu Cheng for proof-reading drafts of this pearl.
The examples of how the immediate sublists problem may be put to work was suggested by Meng-Tsung Tsai.



\end{document}